\newcommand{\true}{\text{true}}
\newcommand{\false}{\text{false}}
\newcommand{\code}[1]{\mbox{\lstinline[language=nanoml,basicstyle=\small\ttfamily,columns=flexible,mathescape=true]^#1^}}
\newcommand{\scite}[1]{\footnote{\citet*{#1}}}
\begin{document}

\title{Capturing the Future by Replaying the Past}
\subtitle{Functional Pearl}


\author{James Koppel}
\affiliation{
  \institution{MIT}            
  \city{Cambridge}
  \state{MA}
  \country{USA}
}
\email{jkoppel@mit.edu}          

\author{Gabriel Scherer}
\orcid{0000-0003-1758-3938}              
\affiliation{
  \institution{INRIA}
 \country{France}
}
\email{gabriel.scherer@inria.fr}       

\author{Armando Solar-Lezama}
\affiliation{
  \institution{MIT}            
  \city{Cambridge}
  \state{MA}
  \country{USA}
}
\email{asolar@csail.mit.edu}          


\begin{abstract}
Delimited continuations are the mother of all monads! So goes the slogan inspired by Filinski's 1994 paper, which showed that delimited continuations can implement any monadic effect, letting the programmer use an effect as easily as if it was built into the language. It's a shame that not many languages have delimited continuations.

Luckily, exceptions and state are also the mother of all monads! In this Pearl, we show how to implement delimited continuations in terms of exceptions and state, a construction we call {\it thermometer continuations}. While traditional implementations of delimited continuations require some way of "capturing" an intermediate state of the computation, the insight of thermometer continuations is to reach this intermediate state by replaying the entire computation from the start, guiding it using a recording so that the same thing happens until the captured point.

Along the way, we explain delimited continuations and monadic reflection, show how the Filinski construction lets thermometer continuations express any monadic effect, share an elegant special-case for nondeterminism, and discuss why our construction is not prevented by theoretical results that exceptions and state cannot macro-express continuations.
\end{abstract}

\begin{CCSXML}
<ccs2012>
<concept>
<concept_id>10003752.10010124.10010125.10010126</concept_id>
<concept_desc>Theory of computation~Control primitives</concept_desc>
<concept_significance>500</concept_significance>
</concept>
<concept>
<concept_id>10003752.10010124.10010125.10010127</concept_id>
<concept_desc>Theory of computation~Functional constructs</concept_desc>
<concept_significance>500</concept_significance>
</concept>
<concept>
<concept_id>10011007.10011006.10011008.10011009.10011012</concept_id>
<concept_desc>Software and its engineering~Functional languages</concept_desc>
<concept_significance>300</concept_significance>
</concept>
<concept>
<concept_id>10011007.10011006.10011008.10011024.10011027</concept_id>
<concept_desc>Software and its engineering~Control structures</concept_desc>
</concept>
</ccs2012>
\end{CCSXML}

\ccsdesc[500]{Theory of computation~Control primitives}
\ccsdesc[500]{Theory of computation~Functional constructs}
\ccsdesc[300]{Software and its engineering~Functional languages}
\ccsdesc[300]{Software and its engineering~Control structures}

\ccsdesc[500]{Software and its engineering~General programming languages}

\keywords{monads, delimited continuations}  

\maketitle

\section{Introduction}
In the days when mainstream languages have been adopting higher-order functions, advanced monadic effects like continuations and nondeterminism have held out as the province of the bourgeois programmer of obscure languages. Until now, that is.

Of course, there's a difference between effects which are built into a language and those that must be encoded. Mutable state is built-in to C, and so one can write \code{int x = 1; x += 1; int y = x + 1;}. Curry \scite{hanus1995curry} is nondeterministic, and so one can write \code{(3 ?\ 4) * (5 ?\ 6)}, which evaluates to all of $\{15,18,20,24\}$. This is called the \emph{direct style}. When an effect is not built into a language, the \emph{monadic}, or \emph{indirect}, style is needed. In the orthodox indirect style, after every use of an effect, the remainder of the program is wrapped in a lambda. For instance, the nondeterminism example could be rendered in Scala as \code{List(3,4).flatMap(x => List(5,6).flatMap(y => List(x * y)))}. Effects implemented in this way are called \emph{monadic}. "Do-notation," as seen in Haskell, makes this easier, but still inconvenient.

We show that, in any language with exceptions and state, you can implement any monadic effect in direct style. With our construction, you could implement a \code{?} operator in Scala such that the example \code{(3 ?\ 4) * (5 ?\ 6)} will run and return \code{List(15,18,20,24)}. Filinski showed how to do this in any language that has an effect called \emph{delimited continuations} or \emph{delimited control}.\scite{Filinski94} We first show how to implement delimited continuations in terms of exceptions and state, a construction we call \emph{thermometer continuations}, named for a thermometer-like visualization. Filinski's result does the rest. Continuations are rare, but exceptions are common. With thermometer continuations, you can get any effect in direct style in 9 of the TIOBE top 10 languages (all but C).\scite{tiobe10}

Here's what delimited continuations look like in cooking: Imagine a recipe for making chocolate nut bars. Soak the almonds in cold water. Rinse, and grind them with a mortar and pestle. Delimited continuations are like a step that references a sub-recipe. Repeat the last two steps, but this time with pistachios. Delimited continuations can perform arbitrary logic with these subprograms ("Do the next three steps once for each pan you'll be using"), and they can abort the present computation ("If using store-bought chocolate, ignore the previous four steps"). They are "delimited" in that they capture only a part of the program, unlike traditional continuations, where you could not capture the next three steps as a procedure without also capturing everything after them, including the part where you serve the treats to friends and then watch Game of Thrones. Implementing delimited continuations requires capturing the current state of the program, along with the rest of the computation up to a "delimited" point. It's like being able to rip out sections of the recipe and copy them, along with clones of whatever ingredients have been prepared prior to that section. This is a form of "time travel" that typically requires runtime support --- if the nuts had not yet been crushed at step N, and you captured a continuation at step N, when it's invoked, the nuts will suddenly be uncrushed again.

The insight of thermometer continuations is that every subcomputation is contained within the entire computation, and so there is an alternative to time travel: just repeat the entire recipe from the start! But this time, use the large pan for step 7. Because the computation contains delimited control (which can simulate any effect), it's not guaranteed to do the same thing when replayed. Thermometer continuations hence record the result of all effectful function calls so that they may be replayed in the next execution: the \emph{past} of one invocation becomes the \emph{future} of the next. Additionally, like a recipe step that overrides previous steps, or that asks you to let it bake for an hour, delimited continuations can abort or suspend the rest of the computation. This is implemented using exceptions.


This approach poses an obvious limitation: the replayed computation can't have any side effects, except for thermometer continuations. And replays are inefficient. Luckily, thermometer continuations can implement all other effects, and there are optimization techniques that make it less inefficient. Also, memoization --- a "benign effect" --- is an exception to the no-side-effects rule, and makes replays cheaper. The upshot is that our benchmarks in Section \ref{sec:benchmarks} show that thermometer continuations perform surprisingly well against other techniques for direct-style effects.

Here's what's in the rest of this Pearl: Our construction has an elegant special case for nondeterminism, presented in Section \ref{sec:nondet}, which also serves as a warm-up to full delimited control. In the following section, we give an intuition for how to generalize the nondeterminism construction with continuations. Section \ref{sec:dc} explains thermometer continuations. Section \ref{sec:monadic-reflection} explains Filinski's construction and how it combines with thermometer continuations to get arbitrary monadic effects. Section \ref{sec:optimization} discusses how to optimize the fusion of thermometer continuations with Filinski's construction, while Section \ref{sec:benchmarks} provides a few benchmarks showing that thermometer continuations are not entirely impractical. Finally, Section \ref{sec:compare-theoretical} discusses why our construction does not contradict a theoretical result that exceptions and state cannot simulate continuations. We also sketch a correctness proof of replay-based nondeterminism in Appendix \ref{sec:correctness}.

\section{Warm-Up: Replay-Based Nondeterminism}

\label{sec:nondet}

Nondeterminism is perhaps the first effect students learn which is not readily available in a traditional imperative language. This section presents replay-based nondeterminism, a useful specialization of thermometer continuations, and an introduction to its underlying ideas.

When writing the examples in this paper, we sought an impure language with built-in support for exceptions and state, and which has a simple syntax with good support for closures. We hence chose to present in SML. For simplicity, this paper will provide implementations that use global state and are hence not thread-safe. We assume they will not be used in multi-threaded programs.

Nondeterminism provides a choice operator \code{choose} such that \code{choose [x1, x2, ...]} may return any of the $x_i$. Its counterpart is a \code{withNondeterminism} operator which executes a block that uses \code{choose}, and returns the list of values resulting from all executions of the block.

\begin{minipage}{\textwidth}
\begin{nanoml}
withNondeterminism (fn () =>
  (choose [2,3,4]) * (choose [5,6]))
(* val it = [10,12,15,18,20,24] : int list *)
\end{nanoml}
\end{minipage}

\noindent In this example, there are six resulting possible values, yet the body returns one value. It hence must run six times. The replay-based implementation of nondeterminism does exactly this: in the first run, the two calls to \code{choose} return $2$ and $5$, then $2$ and $6$ in the second, etc. In doing so, the program behaves as if the first call to \code{choose} was run once but returned thrice. We'll soon show exactly how this is done. But first, let us connect our approach to the one most familiar to Haskell programmers: achieving nondeterminism through monads.

In SML, a monad is any module which implements the following signature (and satisfies the monad laws):

\begin{minipage}{\textwidth}
\begin{nanoml}
signature MONAD = sig
    type 'a m
    val return : 'a -> 'a m
    val bind :  'a m -> ('a -> 'b m) -> 'b m
end;
\end{nanoml}
\end{minipage}

\noindent Here is the implementation of the list monad in SML:

\label{sec:contains-list-monad}

\begin{minipage}{\textwidth}
\begin{nanoml}
structure ListMonad : MONAD = struct
  type 'a m = 'a list
  fun return x = [x]

  fun bind []      f = []
    | bind (x::xs) f = f x @ bind xs f
end;
\end{nanoml}
\end{minipage}

\noindent The ListMonad lets us rewrite the above example in monadic style. In the direct style, \code{choose [2,3,4]} would return thrice, causing the rest of the code to run thrice. Comparatively, in the monadic style, the rest of the computation is passed as a function to \code{ListMonad.bind}, which invokes it thrice.

\begin{nanoml}
- open ListMonad;

- bind [2,3,4] (fn x =>
  bind [5,6]   (fn y =>
    return (x * y)))
(* val it = [10,12,15,18,20,24] : int list *)
\end{nanoml}

\noindent Let's look at how the monadic version is constructed from the direct
one. From the perspective of the invocation \code{choose [2, 3, 4]},
the rest of the expression is like a function awaiting its result, which it must invoke thrice:

\begin{nanoml}
C = (fn \hole => \hole * choose [5, 6])
\end{nanoml}

\noindent This remaining computation is the \emph{continuation} of \code{choose [2,3,4]}. Each time \code{choose [2,3,4]} returns, it invokes this continuation. The monadic transformation \emph{captured} this continuation, explicitly turning it into a function. This transformation captures the continuation at compile time, but it can also be captured at runtime with the \code{call/cc} "call with current continuation" operator: if this first call to \code{choose} were replaced with \code{call/cc (fn k => ...)}, then \code{k} would be equivalent to $C$. So, the functions being passed to bind are exactly what would be obtained if the program were instead written in direct style and used \code{call/cc}.

This insight makes it possible to implement a direct-style \code{choose} operator. The big idea is that, once \code{call/cc} has captured that continuation \code{C} in \code{k}, it must invoke \code{k} thrice, with values $2$, $3$, $4$. This implementation is a little verbose in terms of \code{call/cc}, but we'll later see how delimited continuations make this example simpler than with \code{call/cc}-style continuations.

Like the monadic and \code{call/cc}-based implementations of
nondeterminism, replay-based nondeterminism invokes the continuation
\code{(TEXT_ONLY_HOLE => TEXT_ONLY_HOLE$\;$ * choose [5,6])} three times. Since the program \code{(choose [2,3,4] * choose [5,6])} is in direct style, and it cannot rely on a built-in language mechanism to capture the continuation, it does this by running the entire block multiple times, with some bookkeeping to coordinate the runs. We begin our explanation of replay-based nondeterminism with the simplest case: a variant of \code{choose} which takes only two arguments, and may only be used once.

\subsection{The Simple Case: Two-choice nondeterminism, used once}
\label{subsec:bool-once}

We begin by developing the simplified \code{choose2} operator. Calling \code{choose2 (x,y)} splits the execution into two paths, returning \code{x} in the first path and \code{y} in the second. For example:

\begin{nanoml}
 - (withNondeterminism2 (fn () => 3 * choose2 (5, 6)))
 ==>  [3 * 5, 3 * 6]
 ==>  [15, 18]
\end{nanoml}

\noindent This execution trace hints at an implementation in which \code{withNondetermism2} calls the block twice, and where \code{choose2} returns the first value in the first run, and the second value in the second run. \code{withNondeterminism2} uses a single bit of state to communicate to \code{choose2} whether it is being called in the first or second execution. As long as the block passed to \code{withNondeterminism2} is pure, with no effects other than the single use of \code{choose} (and hence no nested calls to \code{withNondeterminism2}), each execution will have the same state at the time \code{choose} is called.

\begin{minipage}{\textwidth}
\begin{nanoml}
val firstTime = ref false

(* choose2 : 'a * 'a -> 'a *)
fun choose2 (x1,x2) = if !firstTime then x1 else x2

(* withNondeterminism2 : (unit -> 'a) -> 'a list *)
fun withNondeterminism2 f = [(firstTime := true ; f ()),
                             (firstTime := false; f ())]

- withNondeterminism2 (fn () => 3 * choose2 (5,6))
(* val it = [15,18] : int list *)
\end{nanoml}
\end{minipage}

\subsection{Less Simple: many-choices non-determinism, used once}
\label{subsec:list-once}

If \code{choose} expects an \code{'a list} instead of just two
elements, the natural idea is to store, instead of a boolean, the
index of the list element to return. However, there is a difficulty:
at the time where \code{withNondeterminism} is called, it doesn't know
yet what the range of indices will be.

For a given program with a single \code{choose} operator, we'll refer to the list passed to \code{choose} as the \emph{choice list}. Because the code up until the \code{choose} call is deterministic, the choice list will be the same every time, so the program can simply remember its length in a piece of state.

Everything the implementation needs to know to achieve nondeterminism --- which item to choose on the next invocation, and when to stop --- can be captured in two pieces of state: the index to choose from, and the length of the choice list. We call this (index, length) pair a \emph{choice index}. On the first run, our implementation knows that it must pick the first choice (if there is one), but it doesn't know the length of the choice list. Hence, the global state is actually an option of a choice index, which starts at \code{NONE}.

\begin{nanoml}
type idx = int * int
val state : idx option ref = ref NONE
\end{nanoml}

\noindent So, for instance, in the  program \code{withNondeterminism (fn () => 2 * choose [1, 2, 3])}, 
the body will be run thrice, with states of \code{NONE}, \code{SOME (1, 3)}, and \code{SOME (2, 3)} respectively, instructing \code{choose} to select each item from the list.

We define some auxiliary functions on indices: to create the first index in a list, advance the index, and get the corresponding element from the choice list.

\begin{nanoml}
fun start_idx xs = (0, List.length xs)
fun next_idx (k, len) =
  if k + 1 = len then NONE
  else SOME (k + 1, len)
fun get xs (k, len) = List.nth (xs, k)
\end{nanoml}

\noindent We can now write the single-use \code{choose} function for arbitrary lists. An empty list aborts the computation with
an exception. Otherwise, it looks at the state. If it is already set
to some index, it returns the corresponding element. Otherwise it
initializes the state with the first index and returns this element.

\begin{minipage}{\textwidth}
\begin{nanoml}
exception Empty
\end{nanoml}
\begin{nanoml}
fun choose [] = raise Empty
  | choose xs = case !state of
                  NONE => let val i = start_idx xs in
                              state := SOME i;
                              get xs i
                          end
                | SOME i => get xs i
\end{nanoml}
\end{minipage}

\noindent The \code{withNondeterminism} function loops through every choice index, accumulating the results into a list. It returns an empty value if the choice list is empty.

\begin{nanoml}
fun withNondeterminism f =
  let val v = [f ()] handle Empty => [] in
    case !state of
      NONE => v
    | SOME i => case next_idx i of
                  NONE => v
                | SOME i' => (state := SOME i';
                              v @ withNondeterminism f)
  end
\end{nanoml}

\noindent Here's \code{withNondeterminism} in action:

\begin{nanoml}
- withNondeterminism (fn () => 2 * choose []);
val it = [] : int list
- withNondeterminism (fn () => 2 * choose [1, 2, 3]);
val it = [2,4,6] : int list
\end{nanoml}

\subsection{Several calls to \code{choose}}
\label{subsec:list-many}

In the previous implementation, a single choice index was sufficient to track all the choices made. To track several uses of \code{choose} in the body of \code{withNondeterminism}, we need a list of choice indices.

We can view a nondeterministic computation as a tree, where each call to
\code{choose} is a node, and each path corresponds to a sequence of choices. Our program must find every possible result of the nondeterministic computation --- it must find all leaves of the tree. The implementation is like a depth first search, except that it must replay the computation (i.e.: start from the root) once for each leaf in the tree. For example:

\begin{nanoml}
withNondeterminism (fn () =>
  if choose [true,false] then
    choose [5,6]
  else
    choose [7,8,9])
\end{nanoml}

\noindent There are five paths in the execution tree of this program. In the first run, the two calls to \code{choose} return $(\true,5)$. In the second, they return $(\true,6)$, followed by $(\false,7), (\false,8)$ and $(\false,9)$. Each path is identified by the sequence of choice indices of choices made during that execution, which we call a \emph{path index}. Our algorithm is built on a fundamental operation: to select a path index, and then execute the program down that path of the computation tree. To do this, it partitions the path index into two stacks during execution: the \code{past} stack contains the choices already made, while \code{future} contains the known choices to be made.

\begin{nanoml}
val past : idx list ref = ref []
val future : idx list ref = ref []

(* auxiliary stack functions *)
fun push stack x = (stack := x :: !stack)
fun pop stack =
  case !stack of
    [] => NONE
  | x :: xs => (stack := xs; SOME x)
\end{nanoml}

\noindent Figure \ref{fig:branching-nondet} depicts the execution tree for the last example, showing values of \code{past} and \code{future} at different points in the algorithm. To make them easier to update, path indices are stored as a stack, where the first element of the list represents the last choice to be made. For instance, the first path, in which the calls to \code{choose} return $true$ and $5$, has path index $[(0,2),(0,2)]$, and the next path, returning $true$ and $6$, has path index $[(1,2),(0,2)]$. These are shown in Figures \ref{fig:branching-nondet-a} and \ref{fig:branching-nondet-b}. The \code{next_path} function inputs a path index, and returns the index of the next path.

\begin{nanoml}
fun next_path [] = []
  | next_path (i :: is) =
    case next_idx i of
      SOME i' => i' :: is
    | NONE => next_path is
\end{nanoml}

\begin{figure}
\centering
\begin{subfigure}[t]{0.34\textwidth}
\centering
\includegraphics[scale=0.33]{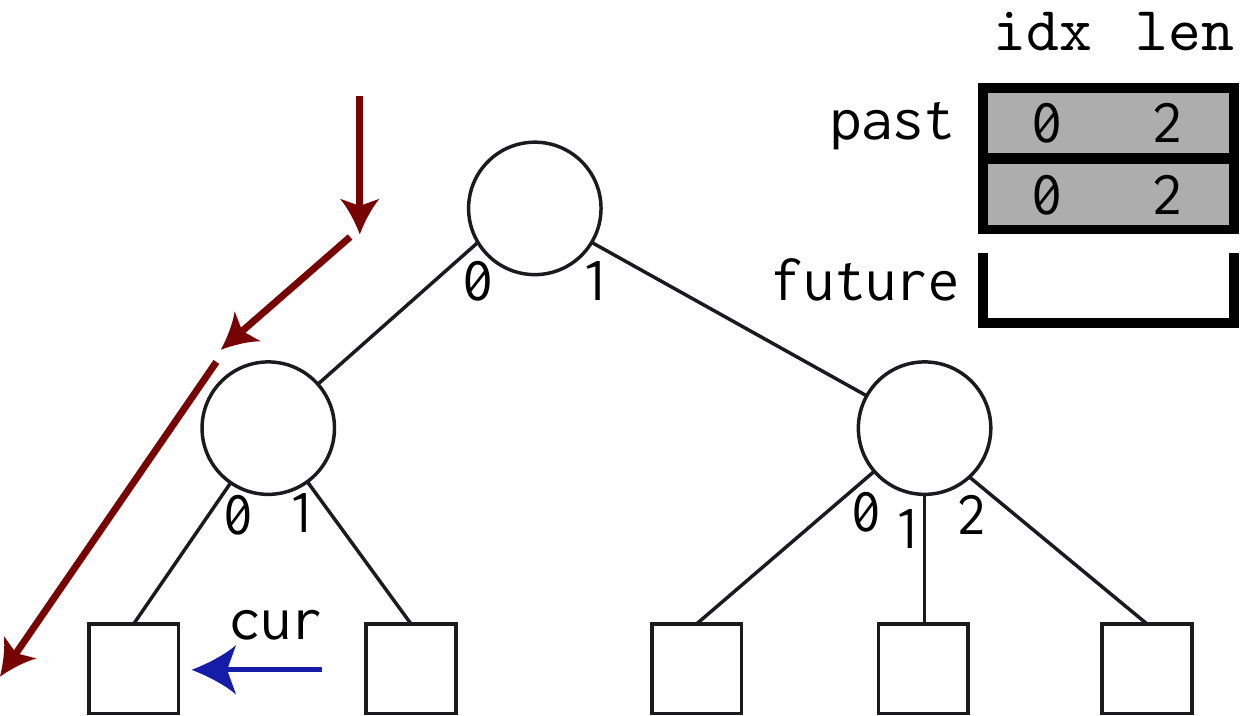}
\subcaption{}
\label{fig:branching-nondet-a}
\end{subfigure}
~
\begin{subfigure}[t]{0.31\textwidth}
\centering
\includegraphics[scale=0.33]{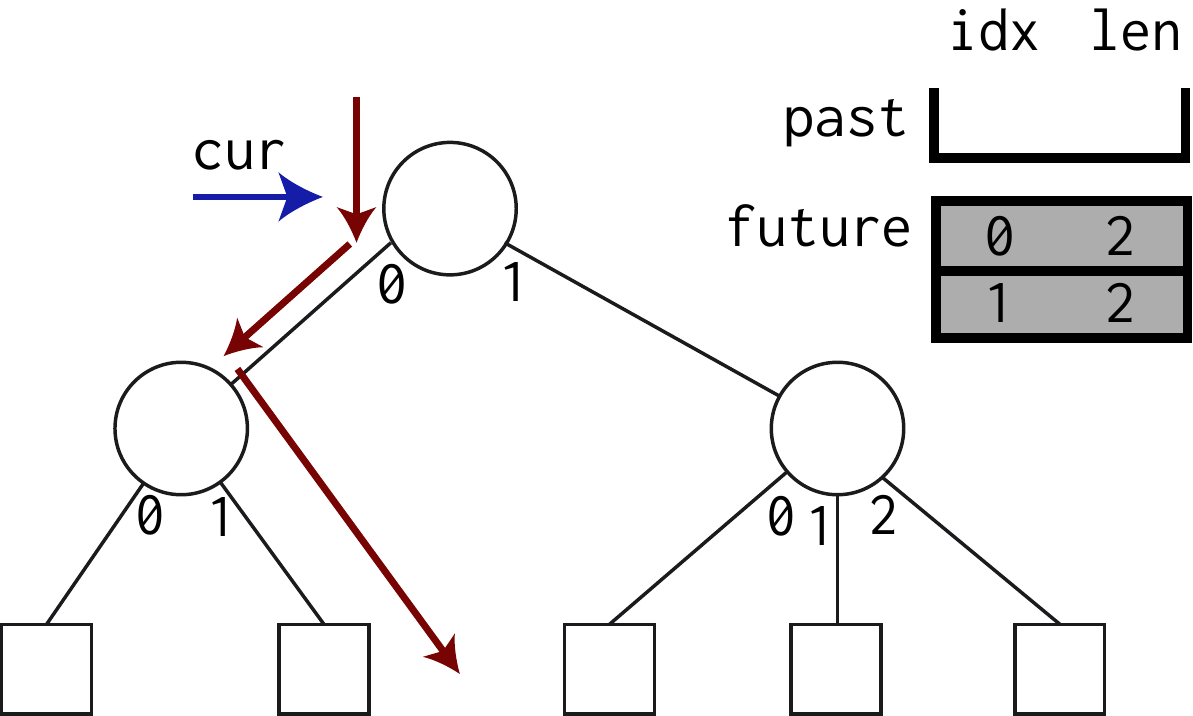}
\subcaption{}
\label{fig:branching-nondet-b}
\end{subfigure}
~
\begin{subfigure}[t]{0.31\textwidth}
\centering
\includegraphics[scale=0.33]{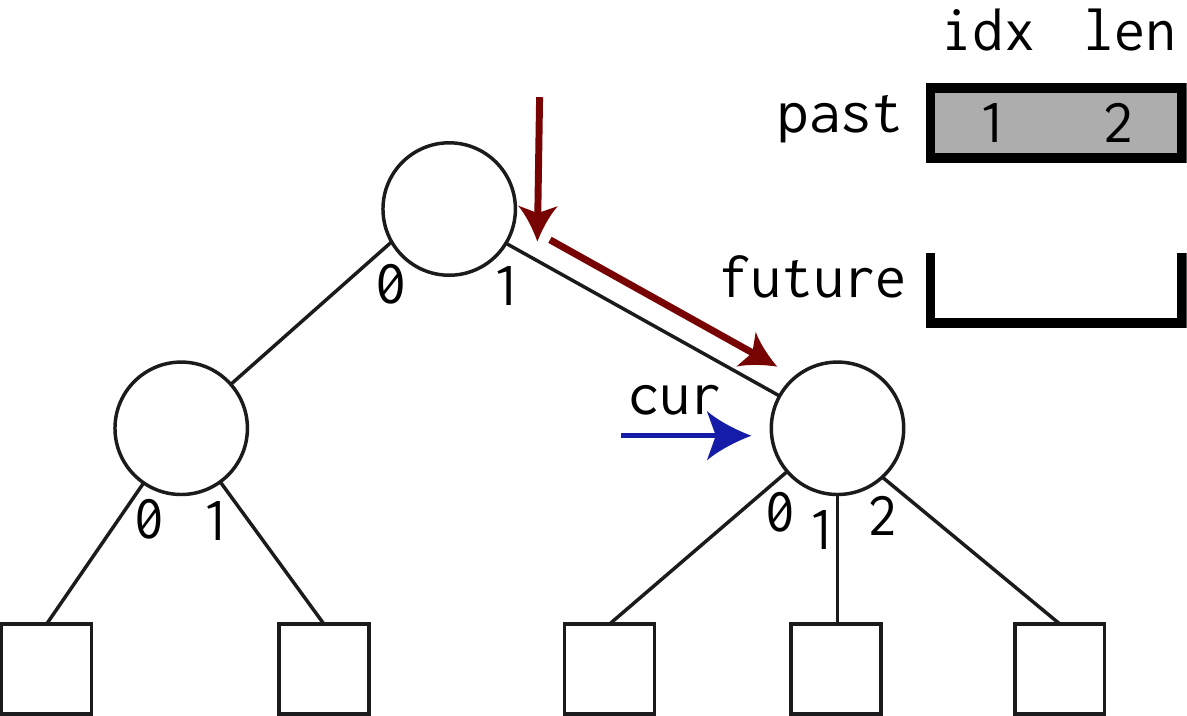}
\subcaption{}
\label{fig:branching-nondet-c}
\end{subfigure}

\caption{Several points in the execution of the replay-based nondeterminism algorithm.}
\label{fig:branching-nondet}
\end{figure}

\noindent When execution reaches a call to \code{choose}, it reads the choice to make from \code{future}, and pushes the result into the past. What if the future is unknown, as it will be in the first execution to reach a given call to \code{choose}? In this case, \code{choose} picks the first choice, and records it in \code{past}. Figure \ref{fig:branching-nondet-c} depicts this scenario for the first time execution enters the \code{else} branch of our example.

\begin{minipage}{\textwidth}
\begin{nanoml}
fun choose [] = raise Empty
  | choose xs = case pop future of
                  NONE => (* no future: start a new index; push it into the past *)
                          let val i = start_idx xs in
                            push past i;
                            get xs i
                          end
                | SOME i => (push past i;
                             get xs i)
\end{nanoml}
\end{minipage}

\noindent The execution of a path in the computation tree ends when a value is
returned. At this point, \code{future} is empty (all known \code{choose} calls have been executed), and \code{past} contains the complete path index. The \code{withNondeterminism} function is exactly the same as in
Section~\ref{subsec:list-once}, except that instead of updating the state to the next choice index, it computes the next path index, which becomes the \code{future} of the next run:

\begin{nanoml}
fun withNondeterminism f =
  let val v = [f ()] handle Empty => []
      val next_future = List.rev (next_path (!past))
  in
    past := [];
    future := next_future;
    if !future = [] then v
    else v @ withNondeterminism f
  end
\end{nanoml}

\noindent When \code{withNondeterminism} terminates, it must be the case that both \code{(!past) = []} and \code{(!future) = []}, which allows to run it again.
\begin{nanoml}
- withNondeterminism (fn () =>
       if choose [true, false] then choose [1, 2] else choose [3, 4]);
val it = [1,2,3,4] : int list
- withNondeterminism (fn () => 2 + choose [1, 2, 3] * choose [1, 10, 100]);
val it = [3,12,102,4,22,202,5,32,302] : int list
\end{nanoml}

\noindent There is still one thing missing: doing nested calls to \code{withNondeterminism} would overwrite \code{future} and \code{past}, so nested calls return incorrect results.

\begin{nanoml}
- withNondeterminism (fn () => if choose [true, false]
                               then withNondeterminism (fn () => choose [1, 2])
                               else []);
val it = [[1,1,2]] : int list list
\end{nanoml}

\subsection{Supporting nested calls}
\label{subsec:list-nested}

The final version of \code{withNondeterminism} supports nested calls
by saving the current values of \code{past} and \code{future} onto
a stack before each execution, and restoring them afterwards. No
change to \code{choose} is needed.

\begin{nanoml}
val nest : (idx list * idx list) list ref = ref []
exception Impossible

fun withNondeterminism f =
  (* before running, save current !past and !future value *)
  (push nest (!past, !future);
   past := [];
   future := [];
   let val result = loop f [] in
     (* then restore them after the run *)
     case pop nest of
       NONE => raise Impossible
     | SOME (p, f) => (past := p;
                       future := f;
                       result))
  end
       
and fun loop f acc =
  (* by the way, let's use a tail-recursive function *)
  let val acc = ([f ()] handle Empty => []) @ acc
      val next_future = List.rev (next_path (!past))
  in
    past := [];
    future := next_future;
    if !future = [] then acc
    else loop f acc
  end
\end{nanoml}

\section{Continuations in Disguise}
\label{sec:why-dc}

The previous section showed a trick for implementing direct-style nondeterminism in deterministic languages. Now, we delve to the deeper idea behind it, and surface the ability to generalize from nondeterminism to any monadic effect. We now examine how replay-based nondeterminism stealthily manipulates continuations. Consider evaluating this expression:

\begin{nanoml}
val e = withNondeterminism (fn () => choose [2,3,4] * choose [5, 6])
\end{nanoml}

\noindent Every subexpression of \code{e} has a continuation, and when it returns a value, it invokes that continuation. After the algorithm takes \code{e} down the first path and reaches the point \code{T = 2 * choose [5, 6]}, this second call to \code{choose} has continuation \code{C = (TEXT_ONLY_HOLE$\;$ => 2 * TEXT_ONLY_HOLE)}.

The \code{choose} function must invoke this continuation twice, with two different values. But
\code{C} is not a function that can be repeatedly invoked: it's a description of what the program does with a value after it's returned, and returning a value causes the program to keep executing, consuming the continuation. So \code{choose} invokes this continuation the first time normally, returning $5$. To copy this ephemeral continuation, it re-runs the computation until it's reached a point identical to \code{T}, evaluating that same call to \code{choose} with a second copy of \code{C} as its continuation --- and this time, it invokes the continuation with $6$.

So, the first action of the \code{choose} operator is capturing the continuation. And what happens next? The continuation is invoked once for each value, and the results are later appended together. We've already seen another operation that invokes a function once on each value of a list and appends the results: the \code{ListMonad.bind} operator. Figure \ref{fig:bind-cont} depicts how applying \code{ListMonad.bind} to the continuation produces direct-style nondeterminism.

So, replay-based nondeterminism is actually a fusion of two separate ideas:

\begin{enumerate}
\item Capturing the continuation using replay
\item Using the captured continuation with operators from the nondeterminism monad
\end{enumerate}

\noindent In Section \ref{sec:dc}, we extract the first half to create \emph{thermometer continuations}, our replay-based implementation of delimited control. The second half --- using continuations and monads to implement any effect in direct style --- is Filinski's construction, which we explain in Section \ref{sec:monadic-reflection}. These produce something more inefficient than the replay-based nondeterminism of Section \ref{sec:nondet}, but we'll show in Section \ref{sec:optimization} how to fuse them together into something equivalent.

\begin{figure}
\centering
\includegraphics[scale=0.3]{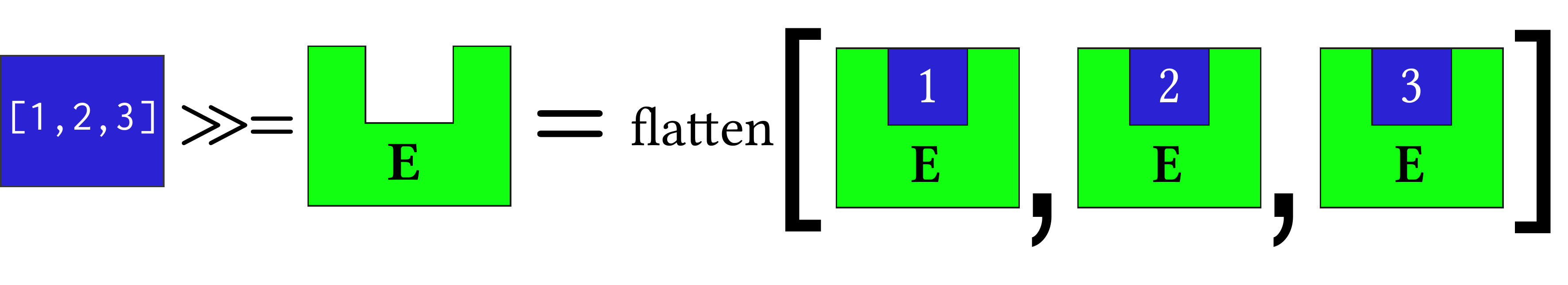}
\caption{To implement \code{choose}: first capture the continuation, and then use the list monad's \code{bind} operator to evaluate it multiple times.}
\label{fig:bind-cont}
\end{figure}

\section{Thermometer Continuations: Replay-Based Delimited Control}
\label{sec:dc}

In the previous section, we explained how the replay-based nondeterminism algorithm actually hides a mechanism for capturing continuations. Over the the remainder of this section, we extract out that mechanism, developing the more general idea of \emph{thermometer continuations}. But first, let us explain the variant of continuations that our mechanism uses: delimited continuations.

\subsection{What is delimited control?}
\label{sec:explains-dc}

When we speak of "the rest of the computation," a natural question is "until where?" For traditional continuations, the answer is: until the program halts. This crosses all abstraction boundaries, making these "undelimited continuations" difficult to work with. Another answer is: until some programmer-specified "delimiter" or "reset point." This is the answer of \emph{delimited continuations}, introduced by Felleisen,\scite{felleisen1988theory} which only represent a prefix of the remaining computation. Just as \code{call/cc} makes continuations first class, allowing a program to modify its continuation to implement many different global control-flow operators, the \code{shift} and \code{reset} constructs make delimited continuations first-class, and can be used to implement many local control-flow operators.

\newcommand{\reset}[0]{\square}

In the remainder of this section, we'll use the notation \code{$E$[x]} to denote plugging value $x$ into evaluation context \code{$E$}. We will also write
\code{$\reset$(t)} for the expression \code{reset (fun () => t)}. These notations makes it easy to give the semantics
for \code{shift} and \code{reset}.

If the body of a \code{reset} reaches a value \code{v}, then this value is returned to the outside context.
\begin{nanoml}
$E_1$[$\reset$(v)]   ==>   $E_1$[v]
\end{nanoml}

\noindent On the other hand, if the evaluation inside a \code{reset} reaches
a call to \code{shift}, then the argument of \code{shift} is invoked
with a delimited continuation \code{k} corresponding to the entire
evaluation context until the closest \code{reset} call:
\begin{nanoml}
$E_1$[$\reset$($E_2$[shift (fn k => t)])] ==>$\;$ $E_1$[$\reset$ (let fun k x = $\reset$($E_2$[x]) in t)]
\end{nanoml}

\noindent where $E_2$ does not contain contexts of the form
\code{$\reset$($E$)}. Figure \ref{fig:dc-switch} depicts this evaluation.

While \code{call/cc} captures an undelimited continuation that goes until the end of the program, \code{shift} only
captures the continuation up to the first enclosing \code{reset} -- it
is \emph{delimited}.

Notice that both the evaluation of \code{shift}'s argument \code{t}
and the body of its continuation \code{k} are wrapped in their own
\code{reset} calls; this affects the (quite delicate) semantics of
nested calls to \code{shift}. There exists other variants of control
operators that make different choices here.\scite{dyvbig2007monadic}

If the continuation \code{k} passed to \code{shift} is never called,
it means that the captured context $E_2$ is discarded and never used;
this use of \code{shift} corresponds to exceptions.
\begin{nanoml}
- reset (fn () => [1, 2] @ shift (fn k => [3, 4]))
==> reset (fn () => let fun k = x = [1, 2] @ x in reset (fn () => [3, 4]))
==> [3, 4]
\end{nanoml}

\noindent On the other hand, using a single continuation multiple times can be used to emulate \code{choose}-style non-determinism.
\begin{nanoml}
- reset (fn () => 3 * shift (fn k => [k 2, k 3, k 4]));
==> reset (fn () => let fun k x = 3 * x in reset (fn () => [k 2, k 3, k 4]))
==> [6, 9, 12]
\end{nanoml}

\begin{figure}
\centering
\includegraphics[scale=0.25]{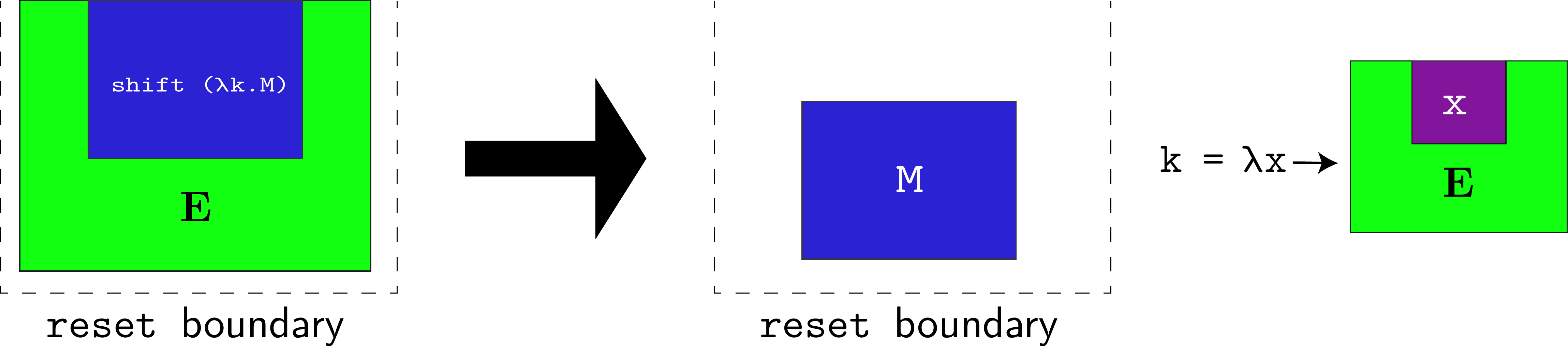}
\caption{Graphical depiction of the action of the \code{shift} operator.}
\label{fig:dc-switch}
\end{figure}

\noindent It is not easy to give precise polymorphic types to \code{shift} and
\code{reset}; in this work, we make the simplifying assumption that they
return a fixed ``answer'' type \code{ans}, as encoded in the following
SML signature.

\begin{nanoml}
signature CONTROL = sig
    type ans
    val reset : (unit -> ans) -> ans
    val shift : (('a -> ans) -> ans) -> 'a
end;
\end{nanoml}

\noindent See the tutorial of Asai and Kiselyov\scite{control-tutorial} for
a more complete introduction to delimited control operators.

\subsection{Baby Thermometer Continuations}
\label{sec:baby-thermo}

Programming with continuations requires being able to capture and copy an intermediate state of the program. We showed in Section \ref{sec:why-dc} that replay-based nondeterminism implicitly does this by replaying the whole computation, and hence needs no support from the runtime. We shall now see how to do this more explicitly.

This section presents a simplified version of thermometer continuations. This version assumes the \code{reset} block only contains one \code{shift}, which invokes its passed continuation $0$ or more times. It also restricts the body of \code{shift} and \code{reset} to only return integers. Still, this implementations makes it possible to handle several interesting examples. 

Whenever we evaluate a \code{reset}, we'll store its body as \code{cur_expr}:

\begin{nanoml}
exception Impossible
val cur_expr : (unit -> int) ref = ref (fn () => raise Impossible)

fun reset f = (cur_expr := f;
               (* rest of function given later *) )
\end{nanoml}

\noindent Now, let $F$ be some function containing a single \code{shift}, i.e.: \code{F = (fn () => E[shift (fn k => t))]}, where $E$ is a pure evaluation context (no additional effects). Then the main challenge in evaluating \code{reset F} is to capture the continuation of the \code{shift} as a function, i.e.: to obtain a \code{C = (fn x => E[x])}. 

Like the \code{future} stack in replay-nondeterminism that controls the action of \code{choose}, our trick is to use a piece of state to control \code{shift}.

\begin{nanoml}
val state : int option ref = ref NONE
\end{nanoml}

Suppose we implement \code{shift} so that \code{(state :=$\;$ SOME x; shift f)} evaluates to \code{x}, regardless of \code{f}. Now consider a function which sets the state, and then replays the \code{reset} body, e.g.: a function \code{k = (fn x => (state :=$\;$ SOME x; (!cur_expr) ()))}. Because \code{E} is pure, the effectful code \code{state :=$\;$  SOME x} commutes with everything in \code{E}. This means the following equivalences hold:

\begin{nanoml}
k y
$\equiv$ (fn x => (state := SOME x; (!cur_expr) ())) y            $\hfill\text{\sffamily (by definition of \texttt{k})}$
$\equiv$ (state := SOME y; E[shift (fn k => t)])
$\equiv$ E[state := SOME y; shift (fn k => t)]                    $\hfill\text{\sffamily (\texttt{E} commutes with \texttt{state})}$
$\equiv$ E[y]                                                     $\hfill\text{\sffamily (\texttt{shift} property)}$
\end{nanoml}

\noindent which means that \code{k} is exactly the continuation we were looking for!

We are now ready to define \code{reset} and \code{shift}. \code{reset} sets the \code{cur_expr} to its body, and resets the \code{state} to \code{NONE}. It then runs its body. For the cases  like \code{reset (fn () => 1 + shift (fn () => 2))}, where the \code{shift} aborts the computation and returns directly to \code{reset}, the \code{shift} will raise an exception and \code{reset} will catch it.

\begin{nanoml}
exception Done of int
fun reset f = (cur_expr := f;
               state := NONE;
               (f () handle (Done x) => x))
\end{nanoml}

This gives us the definition of \code{shift f}: if \code{shift f} is being invoked from running the continuation, then \code{state} will not be \code{NONE}, and it should return the value in \code{state}. Else, it sets up a continuation as above, runs \code{f} with that continuation, and passes the result to the enclosing \code{reset} by raising an exception:

\begin{nanoml}
fun shift f = case !state of
                 SOME x => x
               | NONE => let val expr = !cur_expr
                             fun k x = (state := SOME x;
                                        expr ())
                             val result = f k
                          in
                              raise (Done result)
                          end
\end{nanoml}

\noindent We can now evaluate some examples with basic delimited continuations:

\begin{nanoml}
- reset (fn () => 2 * shift (fn k => 1 + k 5))
val it = 11 : int

- reset (fn () => 1 + shift (fn k => (k 1) * (k 2) * (k 3)))
val it = 24 : int
\end{nanoml}

\noindent The definitions given here lend themselves to a simple equational argument for correctness. Consider an arbitrary reset body with a single \code{shift} and no other effects. It reduces through these equivalences:

\begin{nanoml}

$E_1$[reset (fn () => $E_2$[shift (fn k => t)])]
==> $E_1$[$E_2$[raise (Done (let fun k x = (state := SOME x; $E_2$[shift (fn k => t)]) in t))]
         handle (Done x => x)]
$\equiv$ $E_1$[let fun k x = (state := SOME x; $E_2$[shift (fn k => t)]) in t]
$\equiv$ $E_1$[let fun k x = ($E_2$[state := SOME x; shift (fn k => t)]) in t]
$\equiv$ $E_1$[let fun k x = ($E_2$[x]) in t]
\end{nanoml}

\noindent Except for the nested \code{reset}'s, which do nothing in the case where there is only one \code{shift}, this is exactly the same as the semantics we gave in Section \ref{sec:explains-dc}.

\subsection{General Thermometer Continuations}
\label{sec:thermo-cont}

Despite its simplicity, the restricted delimited control implementation of Section \ref{sec:baby-thermo} already addressed the key challenge: capturing a continuation as a function. Just as we upgraded the single-choice nondeterminism into the full version, we'll now upgrade that implementation into something that can handle arbitrary code with delimited continuations. We explain this process in five steps:

\begin{enumerate}
\item Allowing answer types other than \code{int}
\item Allowing multiple sequential \code{shift}'s
\item Allowing different calls to \code{shift} to return different types
\item Allowing nested \code{shift}'s
\item Allowing nested \code{reset}'s
\end{enumerate}

\noindent We'll go through each in sequence.

\paragraph{Different answer types}

As mentioned in Section \ref{sec:explains-dc}, it's not easy to give precise polymorphic types to \code{shift} and \code{reset}. What can be done is to define them for any fixed answer type. An ML functor --- a "function" that constructs modules ---  is used to parametrize over different answer types.

\begin{minipage}{\textwidth}
\begin{nanoml}
functor Control (type ans) : CONTROL = struct
  type ans = ans
  val cur_expr : (unit -> ans) ref = ref (fn () => raise Impossible)
  (* ... all other definitions ... *)
end;
\end{nanoml}
\end{minipage}

\paragraph{Multiple sequential \code{shift}'s}

\begin{figure}[t]
\centering
\includegraphics[scale=0.5]{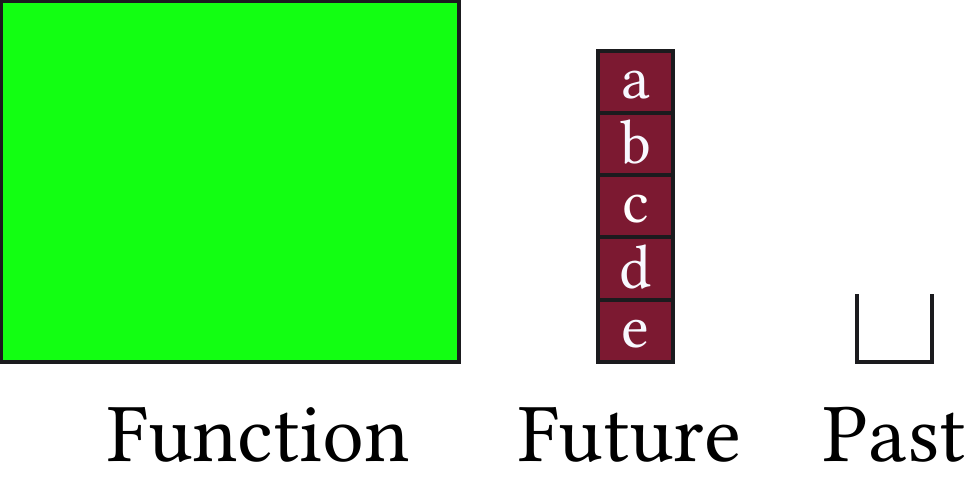}
\caption{A thermometer continuation before being invoked}
\label{fig:thermo-cont-init}
\end{figure}

\begin{figure}[t]
\centering
\includegraphics[scale=0.45]{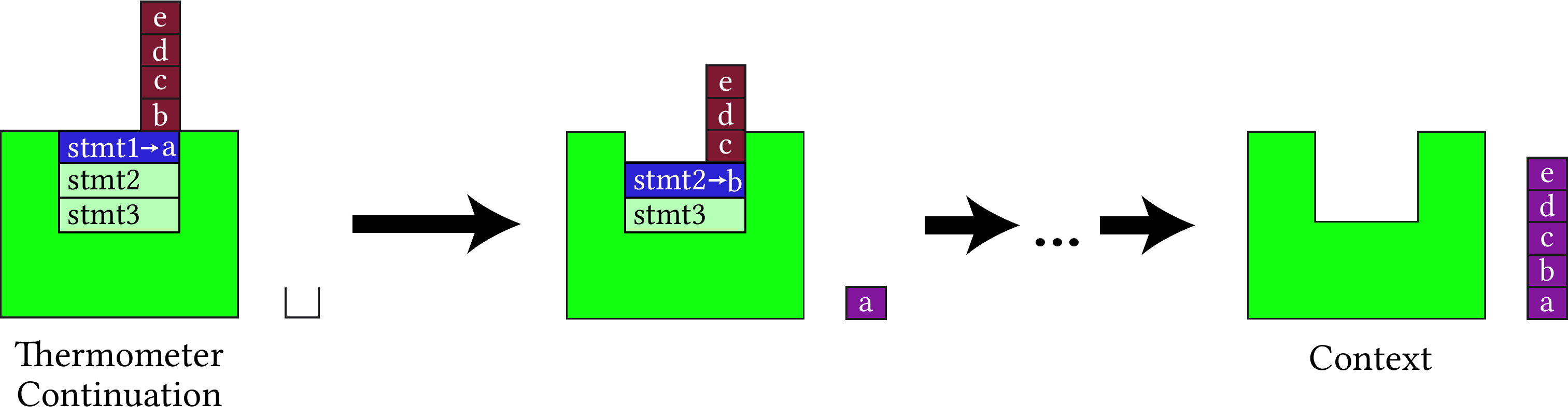}
\caption{Graphical depiction of running a thermometer continuation}
\label{fig:thermo-cont-exec}
\end{figure}

We will now generalize the construction of the previous section to multiple sequential \code{shift}'s. We do this similarly to what we did for replay nondeterminism, where we generalized the single choice index to a sequence of indices representing a path in the computation tree, and created the operation of sending execution along a certain path in the computation tree. This time, we generalize the simple state of the previous section which controls the return value of a \code{shift}, into a sequence of return values of sequential \code{shift}'s, making execution proceed in a certain fashion. Consider this example:

\begin{nanoml}
- reset (fn () => shift (fn k => 1 + k 2)
               * shift (fn k' => 1 + k' 3))
==> 8
\end{nanoml}

\noindent Suppose the state commands the first \code{shift} to return $x$ (call this Execution A). Then replaying the computation evaluates to \code{x * shift (fn k' => 1 + k' 3)} --- exactly the continuation of the first \code{shift}. Meanwhile, if the state commands the first \code{shift} to return $2$ (Execution B), and the second to return $x$, then replay will evaluate to \code{2 * x} --- this is the continuation of the second \code{shift}!

So, this state is a sequence of commands called \emph{frame}s, where each command instructs a \code{shift} to return a value (we'll soon add another kind of command). The sequence of values to be returned by future calls to \code{shift} is called the \emph{future}, so that Execution A was described by the future \code{[RETURN x]}, and Execution B by \code{[RETURN 2, RETURN x]}. This means that the continuation could be described by everything up to the \code{RETURN x}, namely the empty list \code{[]} and \code{[RETURN 2]}, respectively. This gives us our definition of a \emph{thermometer continuation}:

\begin{definition}
A {\bf thermometer continuation} for a continuation $C$ is a pair of a list and a computation, $(s, \texttt{block})$, so that for appropriate  \code{change_state} and \code{run}, the code \code{change_state x s; run block} evaluates to \code{$C$[x]}.
\end{definition}

\noindent Or, in code:

\begin{minipage}{\textwidth}
\begin{nanoml}
datatype frame = RETURN of int
type thermo_cont = (unit -> ans) * frame list
\end{nanoml}
\end{minipage}

\noindent The key operation of thermometer continuations is to run the computation using the associated future, making execution proceed to a certain point. Just as in replay nondeterminism, during execution, each item is moved from the future to the past as it's used:

\begin{nanoml}
val past : frame list ref = ref []
val future : frame list ref = ref []
\end{nanoml}

\noindent Figure \ref{fig:thermo-cont-init} depicts a thermometer continuation, while Figure \ref{fig:thermo-cont-exec} animates the execution of a thermometer continuation, showing how each value in the future matches with a \code{shift}, until the function being evaluated has the desired continuation. The left side of Figure \ref{fig:thermo-cont-exec} resembles a thermometer sticking out of the function, which inspired the name "thermometer continuations."

\paragraph{Different types for different \code{shift}'s}

A future stores the values to be returned by different \code{shift}'s. When these \code{shift}'s are used at different types, the future will hence be a list of heterogeneous types, which poses typing problems. Replay-nondeterminism dodged this problem by storing an index into the choice list instead of the value to return itself. That solution does not work here, where the value to be returned may be from a distant part of the program. In an expression \code{shift (fn k => t)}, the continuation \code{k} will only be used to pass values to future invocations of that same \code{shift}, which has the correct type. However, we cannot explain this to the ML type system.  Instead, we use a hack, a universal type \code{u} in which we assume that all types can be embedded and projected back:\footnote{See Section~\ref{subsec:universal-type} for a discussion of
our use of unsafe casts.}

\begin{nanoml}
signature UNIVERSAL = sig
    type u;
    val to_u : 'a -> u;
    val from_u : u -> 'a;
end;

structure Universal : UNIVERSAL = struct
  datatype u = U;
  val to_u = Unsafe.cast;
  val from_u = Unsafe.cast;
end;
\end{nanoml}

\noindent We use this universal type to define the real \code{frame} type below.

\paragraph{Nested \code{shift}'s}

With sequential \code{shift}'s, bringing execution to the desired point only required replacing \code{shift}'s with a return value. But with nested \code{shift}'s, the continuation to be captured may be inside another \code{shift}, and so replay will need to enter it. Consider this example:

\begin{nanoml}
- 1 + reset (fn () => 2 + shift (fn k => 3 * shift (fn l => l (k 10))))
val it = 37 : int
\end{nanoml}

\noindent The delimited continuation of the second \code{shift} is \code{C = (TEXT_ONLY_HOLE$\;$ => 3 * TEXT_ONLY_HOLE)}. In replay, execution must enter the body of the first \code{shift}, but replace the second \code{shift} with a value. So, in addition to \code{RETURN} frames, there must be another kind of \code{frame}, which instructs replay to enter the body of a \code{shift}.

\begin{nanoml}
datatype frame = RETURN of Universal.u | ENTER
\end{nanoml}

\noindent Then, the desired computation is expressed simply as \code{[ENTER]}, and can be evaluated with value \code{x} using the future \code{[ENTER, RETURN x]}. The values \code{past} and \code{future} are stacks of this \code{frame} type.

\begin{nanoml}
val past : frame list ref = ref []
val future : frame list ref = ref []
\end{nanoml}

\paragraph{Nested \code{reset}'s}

This is handled completely analogously to Section \ref{subsec:list-nested}: each call to \code{reset} will push the previous state to a nesting stack, and restore it afterwards.

\begin{nanoml}
type reset_state = (unit -> ans) * state list * state list
val nest : reset_state list ref = ref []
\end{nanoml}

\noindent We are now ready to begin giving the final implementation of thermometer continuations. As before, \code{shift} will pass its value to \code{reset} by raising an exception.

\begin{nanoml}
exception Done of ans
\end{nanoml}

\noindent The key operation is to play a computation according to a known future. We implement this in the \code{run_with_future} function, which both does that, and establishes a new \code{reset boundary}. When doing so, it must save and restore the current state. This is similar to the \code{withNondeterminism} version with nesting, with the added management of \code{cur_expr}.

\begin{nanoml}
fun run_with_future f f_future =
  (push nest (!cur_expr, !past, !future);
   past := [];
   future := f_future;
   cur_expr := f;
   let val result = (f () handle (Done x) => x) in
     case pop nest of
       NONE => raise Impossible
     | SOME (prev_expr, prev_past, prev_future) =>
       (cur_expr := prev_expr;
        past := prev_past;
        future := prev_future;
        result)
   end)
\end{nanoml}

\noindent The \code{reset} we expose to the user is just a specialized instance
of \code{run_with_future}, running with an empty future:

\begin{nanoml}
fun reset f = run_with_future f []
\end{nanoml}

\noindent Finally, \code{shift} is similar to \code{choose}. The easy case is
when the next item in the future is a return frame; it just has to return the commanded value.

\begin{nanoml}
fun shift f = case pop future of
    SOME (RETURN v) =>
      (push past (RETURN v);
       Universal.from_u v)
  | (NONE | SOME ENTER) =>
    let val new_future = List.rev (!past)
        val our_expr = !cur_expr
        fun k v = run_with_future our_expr (new_future @ [RETURN (Universal.to_u v)])
        val () = push past ENTER
        val result = f k
    in
        raise (Done result)
    end
\end{nanoml}

\noindent The two other cases are more delicate. The \code{ENTER} case
corresponds to the case where execution must enter the
\code{shift} body --- the function \code{f}. The case \code{NONE} is
when the future is unknown, but, in that case, it should do exactly
the same: enter the \code{shift} body.

It records this decision by pushing an \code{ENTER} frame into the past, and prepares
a function \code{k} that invokes the thermometer continuation defined by \code{(!cur_expr, List.rev (!past))}. It invokes the thermometer continuation by appending a new value to the future, and then calling \code{run_with_future}. The body of the \code{shift} executes with this \code{k}; if it terminates normally, it raises an exception to pass the result to the enclosing \code{reset}.

\begin{nanoml}
- 1 + reset (fn () => 2 + shift (fn k => 3 * shift (fn l => l (k 10))))
==> 37
\end{nanoml}

\subsection{What language features does this need?}

Thermometer continuations clearly require exceptions and state, but the implementation of this section uses a couple other language features as well. Here, we discuss why we chose to use them, and ideas for alternatives.

\paragraph{Why we need \code{$\;$Unsafe.cast} for universal types}
\label{subsec:universal-type}

Our implementation of thermometer continuations uses \code{Unsafe.cast}, as in the
original presentation of Filinski,\scite{Filinski94} but \code{shift}
and \code{reset} are carefully designed so that these casts never
fail. Similarly, we believe that our use of \code{Unsafe.cast} is
actually type-safe: we only inject values into the universal type at
the call site of an effectful function, and only project values back in
the same call site they are coming from.

In his follow-up paper,\scite{filinski1999} Filinski was able to
upgrade these definitions to a type-safe implementation of the
universal type.\footnote{See several implementations at
  http://mlton.org/UniversalType} Unfortunately, we cannot just
replace our implementation with a safe one.

Safe implementations of universal types cannot provide a pair of
uniform polymorphic embedding and projection functions \code{('a -> u)
  * (u -> 'a)} -- this is inherently unsafe. Instead, they provide
a function that, on each call, generates a new embedding/projection
pair for a given type: \code{unit -> (('a -> u) * (u -> 'a))}. They
all use a form of dynamic generativity, such as the allocation of new
memory, or the creation of a new exception constructor.

These implementations do not work properly with our replay-based
technique: each time a computation is replayed, a fresh
embedding/projection pair is generated, and our implementation then
tries to use the projection of a new pair on a value (of the
same type) embedded by an old pair, which fails at runtime.

It may be possible to integrate the generation of universal embeddings
within our replay machinery, to obtain safe implementations, but we
left this delicate enhancement to future work.

\paragraph{Garbage Collection and Closures}

This implementation uses closures heavily. This is a small barrier to someone implementing thermometer continuations in C, as they must encode all closures as function pointer/environment pairs. However, there's a bigger problem: how to properly do memory management for these environments? A tempting idea is to release all environments after the last \code{reset} terminates, but this doesn't work: the captured continuations may have unbounded lifetime, as exemplified by the state monad implementation in Section \ref{sec:state-monad}.

The options to solve this problem are the same as when designing other libraries with manual memory management: either make the library specialized enough that it can guess memory needs, or provide extra APIs so the programmer can indicate them. For example, one possibility for the latter is to allocate all closures for a given \code{reset} block in the same arena. At some point, the programmer knows that the code will no longer need any of the continuations captured within the \code{reset}. The programmer invokes an API to indicate this, and all memory in that arena is released.

\section{Arbitrary Monads}
\label{sec:monadic-reflection}

In 1994, Filinski showed how to use delimited continuations to express any monadic effect in direct style.\footnote{\citet*{Filinski94}; see the blog post of \citet*{piponi_mother} for an introduction.} In this section, we hope to convey an intuition for Filinski's construction, and also discuss what it looks like when combined with thermometer continuations. The code in this section comes almost verbatim from Filinski. This section is helpful for understanding the optimizations of Section \ref{sec:optimization}, in which we explain how to fuse thermometer continuations with the code in this section.

\subsection{Monadic Reflection}

In SML and Java, there are two ways to program with mutable state. The first is to use the language's built-in variables and assignment. The second is to use the monadic encoding, programming similar to how a pure language like Haskell handles mutable state. A stateful computation is a monadic value, a pure value of type $s \rightarrow (a, s)$.

These two approaches are interconvertible. A program can take a value of type $s \rightarrow (a, s)$ and run it, yielding a stateful computation of return type $a$. This operation is called \code{reflect}. Conversely, it can take a stateful computation of type $a$, and \code{reify} it into a pure value of type $s \rightarrow (a,s)$.  Together, the \code{reflect} and \code{reify} operations give a correspondence between monadic values and effectful computations. This correspondence is termed \emph{monadic reflection}. Fillinski showed how, using delimited control, it is possible to encode these operations as program terms.

The \code{reflect} and \code{reify} operations generalize to arbitrary monads. Consider nondeterminism, where a nondeterministic computation is either an effectful computation of type \code{'a}, or a monadic value of type \code{'a list}. Then the \code{reflect} operator would take the input $[1,2,3]$ and nondeterministically return $1$, $2$, or $3$ --- this is the \code{choose} operator from Section \ref{sec:nondet}). So \code{reify} would take a computation that nondeterministically returns $1$, $2$, or $3$, and return the pure value $[1,2,3]$ --- this is \code{withNondeterminism}.

So, for languages which natively support an effect, \code{reflect} and \code{reify} convert between  effects implemented by the semantics of the language, and effects implemented within the language. Curry is a language with built-in nondeterminism, and it has these operators, calling them \code{anyOf} and \code{getAllValues}. SML does not have built-in nondeterminism, but, for our previous example, one can think of the code within a \code{withNondeterminism} block as running in a language extended with nondeterminism. So, one can think of the construction in the next section as being able to extend a language with any monadic effect.

In SML, monadic reflection is given by the following signature:

\begin{minipage}{\textwidth}
\begin{nanoml}
signature RMONAD = sig
    structure M : MONAD
    val reflect : 'a M.m -> 'a
    val reify : (unit -> 'a) -> 'a M.m
end;
\end{nanoml}
\end{minipage}

\subsection{Monadic Reflection through Delimited Control}
\label{sec:implements-reflection}

Filinski's insight was that the monadic style is similar to an older concept called continuation-passing style. We can see this by revisiting an example from Section \ref{sec:contains-list-monad}. Consider this expression:

\begin{nanoml}
withNondeterminism (fun () => (choose [2,3,4]) * (choose [5,6]))
\end{nanoml}

\noindent It is transformed into the monadic style as follows:

\begin{nanoml}
bind [2,3,4] (fn x =>
bind [5,6]   (fn y =>
  return (x * y)))
\end{nanoml}

\noindent The first call to \code{choose} has continuation \code{fn TEXT_ONLY_HOLE => TEXT_ONLY_HOLE * (choose [5,6])}. If \code{x} is the value returned by the first call to \code{choose}, the second has continuation \code{fn TEXT_ONLY_HOLE => x * TEXT_ONLY_HOLE}. These continuations correspond exactly to the functions bound in the monadic style. The monadic \code{bind} is the "glue" between a value and its continuation. Nondeterministically choosing from \code{[2,3,4]} wants to return thrice, which is the same as invoking the continuation thrice, which is the same as binding to the continuation.

So, converting a program to monadic style is quite similar to converting a program to this "continuation-passing style." Does this mean a language that has continuations can program with monads in direct style? Filinski answers yes.

The definition of monadic reflection in terms of delimited control is short. The overall setup is as follows:

\begin{minipage}{\textwidth}
\begin{nanoml}
functor Represent (M : MONAD) : RMONAD = struct
  structure C = Control(type ans = Universal.u M.m)
  structure M = M

  fun reflect m = ...
  fun reify t = ...
end;
\end{nanoml}
\end{minipage}

\noindent Figure \ref{fig:bind-cont} showed how nondeterminism can be implemented by binding a value to the (delimited) continuation. The definition of \code{reflect} is a straightforward generalization of this.

\begin{nanoml}
  fun reflect m = C.shift (fn k => M.bind m k)
\end{nanoml}

\noindent If \code{reflect} uses \code{shift}, then \code{reify} uses \code{reset} to delimit the effects implemented through shift. This implementation requires use of type casts, because \code{reset} is monomorphized to return a value of type \code{Universal.u m}. Without these type casts, \code{reify} would read

\begin{nanoml}
  fun reify t = C.reset (fn () => M.return (t ()))
\end{nanoml}

\noindent Because of the casts, the actual definition of \code{reify} is slightly more complicated:

\begin{nanoml}
  fun reify t = M.bind (C.reset (fn () => M.return (Universal.to_u (t ()))))
                       (M.return o Universal.from_u)
\end{nanoml}

\subsection{Example: Nondeterminism}

\label{sec:delim-nondet}

Using this general construction, we immediately obtain an implementation of nondeterminism equivalent to the one in Section \ref{sec:nondet} by using \code{ListMonad} (defined in Section \ref{sec:contains-list-monad}).

\begin{nanoml}
structure N = Represent(ListMonad)

fun choose xs = N.reflect xs
fun fail () = choose []

- N.reify (fn () => let val x = choose [2,3,4] * choose [5,7] in
                    if x >= 20 then x
                    else fail () end)

(* val it = [21,20,28] : int list *)
\end{nanoml}

\noindent It's worth thinking about how this generic implementation executes on the example, and contrasting it with the direct implementation of Section \ref{sec:nondet}. The direct implementation executes the function body $6$ times, once for each path of the computation. The generic one executes the function body $10$ times (once with a future stack of length $0$, $3$ times with length $1$, and $6$ times with length $2$). In the direct implementation, \code{choose} will return a value if it can. In the generic one, \code{choose} never returns. Instead, it invokes the thermometer continuation, causes the desired value to be returned at the equivalent point in the computation, and then raises an exception containing the final result. So, $4$ of those times, it could just return a value rather than replaying the computation. This is the idea of one of the optimizations we discuss in Section \ref{sec:optimization}. This, plus one other optimization, let us derive the direct implementation from the generic one.

\subsection{Example: State monad}
\label{sec:state-monad}

State implemented through delimited control works differently from SML's native support for state.

\begin{minipage}{\textwidth}
\begin{nanoml}
functor StateMonad (type state) : MONAD = struct
  type 'a m = state -> 'a * state
  fun return x = fn s => (x, s)
  fun bind m f = fn s => let val (x, s') = m s
                         in f x s' end
end;

structure S = Represent(StateMonad(type state = int))

fun tick () = S.reflect (fn s => ((), s+1))
fun get ()  = S.reflect (fn s => (s, s))
fun put n   = S.reflect (fn _ => ((), n))

- #1 (S.reify (fn () => (put 5; tick ();
                         2 * get ()))
               0)

(* val it = 12 : int *)
\end{nanoml}
\end{minipage}

\noindent Let's take a look at how this works, starting with the example \code{reify (fn () => 3 * get ())}.

\begin{nanoml}
- (reify (fn () => 3 * get ())) 2
===> (reify (fn () => 3 * (reflect (fn s => (s, s))))) 2
===> (reset (fn () => return (3 * (shift (fn k => bind (fn s => (s, s)) k))))) 2
===> (fn k => bind (fn s => (s,s)) k end)(fn x => return (3*x)) 2
===> (let val k = (fn x => fn s => (3*x, s)) in (fn s => k s s)) 2
===> (fn s => (3*s, s)) 2
===> (6, 2)
\end{nanoml}

\noindent The \code{get} in \code{reify (fn () => 3 * get ())} suspends the current computation, causing the \code{reify} to return a function which awaits the initial state. Once invoked with an initial state, it resumes the computation (multiplying by $3$).

What does \code{reify (fn () => (tick (); get ()))} do?
  The call to \code{tick ()} again suspends the computation and awaits
  the state, as we can see from its expansion \code{$\:$shift (fn k => fn s => k () (s+1))}. Once it receives \code{s}, it resumes it, returning \code{()} from \code{tick}. The call to \code{get} suspends the computation again, returning a function that awaits a new state; \code{tick} supplies \code{s+1}.

Think for a second about how this works when \code{shift} and \code{reset} are implemented as thermometer continuations. The \code{get}, \code{put}, and \code{tick} operators do not communicate by mutating state. They communicate by suspending the computation, i.e.: by raising exceptions containing functions. So, although the implementation of state in terms of thermometer continuations uses SML's native support for state under the hood, it only does so tangentially, to capture the continuation.

\section{Optimizations}
\label{sec:optimization}

%
%

Section \ref{sec:delim-nondet} compared the two implementations of nondeterminism, and found that the generic one using thermometer continuations replayed the computation gratuitously. Thermometer continuations also replay the program in nested fashion, consuming stack space. In this section, we sketch a few optimizations that make monadic reflection via thermometer continuations less impractical, and illustrate the connections between the two implementations of nondeterminism.

All optimizations in this section save memoization require fusing thermometer continuations with monadic reflection. So, instead of using the definition of \code{reflect} in terms of \code{shift} given in Section \ref{sec:implements-reflection}, the require defining a single combined \code{reflect} operator.

\subsection{CPS-bind: Invoking the Continuation at the Top of the Stack}
\label{sec:cont-top-stack}

The basic implementation of thermometer continuations wastes stack space. When there are multiple \code{shift}'s, execution will reach the first \code{shift}, then replay the function and reach the second \code{shift}, then replay again, etc, all the while letting the stack deepen. And yet, at the end, it will raise an exception that discards most of the computation on the stack. So, the implementation could save a lot of stack space by raising an exception before replaying the computation.


So, when a program invokes a thermometer continuation, it will need to raise an exception to transfer control to the enclosing \code{reset}, and thereby signal \code{reset} to replay the computation. While the existing \code{Done} exception signals that a computation is complete, it can do this with a second kind of exception, which we call \code{Invoke}.

However, the \code{shift} and \code{reset} functions do not invoke a thermometer continuation: the body of the \code{shift} does. In the case of monadic reflection, this is the monad's \code{bind} operator. Raising an \code{Invoke} exception will discard the remainder of \code{bind}, so it must somehow also capture the continuation of \code{bind}.  We can do this by writing \code{bind} itself in the continuation-passing style, i.e.: with the following signature:

\begin{nanoml}
type ('b, 'c) cont = ('b -> 'c) -> 'c
val bind : 'a m -> ('a -> ('b m, 'c) cont) -> ('b m, 'c) cont
\end{nanoml}

\noindent The supplementary material contains code with this optimization, and uses it to implement nondeterminism in a way that executes more similarly to the direct implementation. We give here some key points. Here's what the CPS'd \code{bind} operator for the list monad looks like:

\begin{minipage}{\textwidth}
\begin{nanoml}
fun bind []      f d = d []
  | bind (x::xs) f d = f x (fn a => bind xs f (fn b => d (a @ b)))
\end{nanoml}
\end{minipage}

\noindent When used by \code{reflect}, \code{f} becomes a function that raises the \code{Invoke} exception, transferring control to the enclosing \code{reset}, which then replays the entire computation, but at the top level. The continuations of the \code{bind} get nested in a manner which is harder to describe, but ultimately get evaluated at the very end, also at the top level. So the list appends in \code{d (a @ b)} actually run at the top level of the \code{reset}, similar to how, in direct nondeterminism, it is the outer call to \code{withNondeterminism} that aggregates the results of each path.

While this CPS-monad optimization as described here can be used to implement many monadic effects, it cannot be used to implement all of them, nor general delimited continuations. Consider the state monad from Section \ref{sec:state-monad}: \code{bind} actually returns a function which escapes the outer \code{reify}. Then, when the program invokes that function and it tries to invoke its captured thermometer continuation, it will try to raise an \code{Invoke} exception to transfer control to its enclosing \code{reify}, but there is none. This CPS-monad optimization as described does not work if the captured continuation can escape the enclosing \code{reset}. With more work, it could use mutable state to track whether it is still inside a \code{reset} block, and then choose to raise an \code{Invoke} exception or invoke the thermometer continuation directly.

\subsection{Direct Returns}

In our implementation, a \code{reset} body \code{E[reflect (return 1)]} expands into

\begin{nanoml}
E[raise (Done (E[1] handle (Done x => x)))]
\end{nanoml}

\noindent So, the entire computation up until that \code{reflect} runs twice. Instead of replaying the entire computation, that \code{reflect} could just return a value. \code{E[reflect (return 1)]} could expand into \code{E[1]}.

The expression \code{reflect (return 1)} expands into \code{shift (fn k => bind (return 1) k)}. By the monad laws, this is equivalent to \code{shift (fn k => k 1)}. Tail-calling a continuation is the same as returning a value, so this is equivalent to \code{1}. So, it's the tail-call that allows this instance of \code{reflect} to return a value instead of replaying the computation.

Implementing the direct-return optimization is an additional small tweak to the CPS-\code{bind} optimization. We duplicate the second argument of bind, resulting in two function arguments that expect inputs at type \code{'a}.

\begin{nanoml}
val bind : 'a m -> ('a -> ('b m) cont) -> ('a -> ('b m) cont) -> ('b m -> 'c) -> 'c
\end{nanoml}

\noindent To the implementor of \code{bind}, we ask that the first function be invoked on the first value of type \code{'a} (if any) extracted out of the \code{'a m} output, and the second be used on all later values.

In our implementation of \code{reflect}, we use this more flexible \code{bind} type for optimization. The second function argument we pass raises an \code{Invoke} exception, as described in Section \ref{sec:cont-top-stack}. The first function returns a value directly, after updating the internal state of the thermometer. So, the first time \code{bind} invokes the continuation, it may do so by directly returning a value. Thereafter, it must invoke the continuation by explicitly raising an \code{Invoke} exception.

The \code{bind} operator for the list monad never performs a tail-call (it must always wrap the result in a list), but, after converting it to CPS, it always performs a tail-call. So this direct-return optimization combines well with the previous CPS-monad optimization. Indeed, applying them both transforms the generic nondeterminism of Section \ref{sec:delim-nondet} into the direct nondeterminism of Section \ref{sec:nondet}. In Section \ref{sec:benchmarks}, we show benchmarks showing that this actually gives an  implementation of nondeterminism as fast as the code in Section \ref{sec:nondet}.

In the supplementary material, we demonstrate this optimization, providing optimized implementations of nondeterminism (list monad) and failure (maybe monad).

\subsection{Memoization}

While the frequent replays of thermometer continuations can interfere with any other effects in a computation, it cannot interfere with observationally-pure memoization. Memoizing nested calls to \code{reset} can save a lot of computation, and any expensive function can memoize itself without integrating into the implementation of thermometer continuations.

\section{Performance numbers}
\label{sec:benchmarks}

The idea of invoking continuations by replaying the past at first
seems woefully impractical. In fact, it performs surprisingly well in
a diverse set of benchmarks. This shows that one may seriously
consider using this approach to conveniently write
non-performance-critical applications in a runtime (OCaml, Java,
Javascript...)  that does not provide control operators.

Working code for all benchmarks is available from
\url{https://github.com/jkoppel/thermometer-continuations}.

\subsection{The worst case}

The overhead of replaying pure computations depends on the ratio, in
an effectful program, of computation time spent in pure and impure
computations, and in the branching structure of the computation tree.

The worst case is when a program runs a costly pure computation,
followed by large branching of an effect -- causing many replays. The
overhead may be arbitrarily large; for example, replay-based or
thermometer-based implementations cause a 10x slowdown on the
following program:
\begin{nanoml}
withNondeterminism (fun () =>
  let val v = long_pure_computation () in
      val i = choose [0,1,2,3,4,5,6,7,8,9] in
    (i, v)
  end
)
\end{nanoml}

\subsection{Search-heavy programs: N-queens}

\begin{table}
\begin{center}

\caption{Benchmark \textsc{NQUEENS}}
\begin{tabular}{r|rrrrr}
\hline
OCaml times & \textbf{10} & \textbf{11} & \textbf{12} & \textbf{13} \\
\hline
\textbf{Indirect }
& 0.007s
& 0.037s
& 0.213s
& 1.308s
\\
\textbf{Replay }
& 0.017s
& 0.101s
& 0.597s
& 3.768s
\\
\textbf{Therm. }
& 0.041s
& 0.221s
& 1.347s
& 8.621s
\\
\textbf{Therm. Opt. }
& 0.019s
& 0.111s
& 0.65s
& 3.924s
\\
\textbf{Filinski (Delimcc)}
& 0.035s
& 0.185s
& 1.236s
& 14.412s
\\
\textbf{Eff. Handlers (Multicore OCaml)}
& 0.019s
& 0.092s
& 0.509s
& 2.81s
\\

\hline
SML times & \textbf{10} & \textbf{11} & \textbf{12} & \textbf{13} \\
\hline
\textbf{Indirect }
& 0.011s
& 0.064s
& 0.36s
& 2.166s
\\
\textbf{Replay }
& 0.029s
& 0.164s
& 1.051s
& 6.562s
\\
\textbf{Therm. }
& 0.06s
& 0.344s
& 2.152s
& 14.305s
\\
\textbf{Therm. Opt. }
& 0.03s
& 0.173s
& 1.151s
& 6.793s
\\
\textbf{Filinski (Call/cc)}
& 0.015s
& 0.079s
& 0.493s
& 2.941s
\\

\hline
MLton times & \textbf{10} & \textbf{11} & \textbf{12} & \textbf{13} \\
\hline
\textbf{Indirect}
& 0.007s
& 0.031s
& 0.187s
& 1.129s
\\
\textbf{Replay}
& 0.016s
& 0.099s
& 0.637s
& 4.295s
\\
\textbf{Finlinski (Call/cc)}
& 0.078s
& 0.399s
& 2.138s
& 12.62s
\\

\hline
Prolog times & \textbf{10} & \textbf{11} & \textbf{12} & \textbf{13} \\
\hline
\textbf{Prolog search (GNU Prolog)}
& 0.165s
& 0.614s
& 3.307s
& 20.401s
\\

\hline
\end{tabular}%
\hspace{1.23cm} 

\begin{tikzpicture}
\begin{semilogyaxis}[
    title={OCaml benchmarks},
    width={7cm},
    legend pos=outer north east,
    ymajorgrids=true,
    grid style=dashed,
    ytick={0.01,0.1,1,10,360},
    yticklabels={10ms,100ms,1s,10s,5m},
]

\addplot[color=black,mark=-]
coordinates {
(10, 0.007)
(11, 0.037)
(12, 0.213)
(13, 1.308)
};
\addlegendentry{Indirect }

\addplot[color=orange,mark=triangle]
coordinates {
(10, 0.017)
(11, 0.101)
(12, 0.597)
(13, 3.768)
};
\addlegendentry{Replay }

\addplot[color=red,mark=x]
coordinates {
(10, 0.041)
(11, 0.221)
(12, 1.347)
(13, 8.621)
};
\addlegendentry{Therm. }

\addplot[color=purple,mark=|]
coordinates {
(10, 0.019)
(11, 0.111)
(12, 0.65)
(13, 3.924)
};
\addlegendentry{Therm. Opt. }

\addplot[color=blue,mark=star]
coordinates {
(10, 0.035)
(11, 0.185)
(12, 1.236)
(13, 14.412)
};
\addlegendentry{Filinski (Delimcc)}

\addplot[color=brown,mark=asterisk]
coordinates {
(10, 0.019)
(11, 0.092)
(12, 0.509)
(13, 2.81)
};
\addlegendentry{Eff. Handlers (Multicore OCaml)}

\end{semilogyaxis}
\end{tikzpicture}

\begin{tikzpicture}
\begin{semilogyaxis}[
    title={SML/NJ benchmarks},
    width={7cm},
    legend pos=outer north east,
    ymajorgrids=true,
    grid style=dashed,
    ytick={0.01,0.1,1,10,360},
    yticklabels={10ms,100ms,1s,10s,5m},
]

\addplot[color=black,mark=-]
coordinates {
(10, 0.011)
(11, 0.064)
(12, 0.36)
(13, 2.166)
};
\addlegendentry{Indirect }

\addplot[color=orange,mark=triangle]
coordinates {
(10, 0.029)
(11, 0.164)
(12, 1.051)
(13, 6.562)
};
\addlegendentry{Replay }

\addplot[color=red,mark=x]
coordinates {
(10, 0.06)
(11, 0.344)
(12, 2.152)
(13, 14.305)
};
\addlegendentry{Therm. }

\addplot[color=purple,mark=|]
coordinates {
(10, 0.03)
(11, 0.173)
(12, 1.151)
(13, 6.793)
};
\addlegendentry{Therm. Opt. }

\addplot[color=blue,mark=star]
coordinates {
(10, 0.015)
(11, 0.079)
(12, 0.493)
(13, 2.941)
};
\addlegendentry{Filinski (Call/cc)}

\end{semilogyaxis}
\end{tikzpicture}

\label{table:nqueens}
\end{center}
\end{table}

The benchmark \textsc{NQUEENS} is the problem of enumerating all
solutions to the n-queens problem. It is representative of search
programs where most of the time is spent in backtracking search, and
we decided to test our approach against a wide variety of alternative
implementations to get a sense of the replay overhead on search-heavy
non-deterministic programs. Table \ref{table:nqueens} reports the
times for each implementation for different $n$.

Numbers in this section were collected using SML/NJ v110.82, OCaml 4.06, MLton
2018-02-07, the experimental OCaml-multicore runtime for OCaml 4.02.2
(dev0), and GNU Prolog 1.4.4, on a Lenovo T440s with an 1.6GHz Intel
Core i5 processor.

All implementations are exponential, and the slope
(the exponential coefficient) is essentially the same for all
implementations: for a given node in the computation tree, the
overhead of replaying the past is proportional to its depth, which is
bounded by $n$, and this bounded multiplicative slowdown factor becomes
a $log(n)$ additive overhead in log-scale.

The generic thermometer-based implementation (using Filinski's
construction with replay-based shift and reset) is noticeably slower
than the simpler replay-based implementation of non-determinism. On
the other hand, the optimizations described in
Section~\ref{sec:optimization} suffice to remove the additional
overhead: optimized thermometers are just as fast as
replay-nondeterminism.

We benchmarked Filinski's construction using the delimcc
library,\scite{kiselyov2010delimited} which is an implementation of
delimited control doing low-level stack copying; it is comparable in
speed to the generic thermometers. This is a nice result even though
delimcc is known not to be competitive with control operators in
runtimes designed to make them efficient.

We also wrote a version of the benchmark using the effect handlers
provided by the experimental Multicore OCaml
runtime.\scite{multicore-ocaml-effect-handlers}\footnote{This runtime
  offers very efficient one-shot continuations, but copying
  continuations is unsafe and may be less optimized.} The performance
is better than our replay-based non-determinism
(or optimized thermometers), but in the same logarithmic ballpark.

In contrast, in our SML benchmarks, Filinski's construction using the
native \texttt{call/cc} of SML/NJ is significantly faster than
replay-based approaches, closer to the indirect-style baseline. We
explain this by the fact that the SML/NJ compilation strategy and
runtime has made choices (call frames on the heap) that give a very
efficient \texttt{call/cc}, but add some overhead (compared to using
the native stack) to general computation. Indeed, in absolute time,
OCaml's indirect baseline is about 60$\%$ faster than the SML/NJ
baseline, and SML/NJ's \texttt{call/cc} approach is actually only
30$\%$ faster than OCaml's best replay-based implementations, and
slower than the Multicore-OCaml implementation.

To validate this hypothesis, we measured the performance of MLton on
the benchmarks that it supports (not thermometers, which require
a working \texttt{Unsafe.cast}; see
Section~\ref{subsec:universal-type}). The performance results for the
indirect and replay-nondeterminism version are extremely similar to
OCaml's, but the \texttt{call/cc} version is much slower than the one
of SML/NJ, similar to the performances of \texttt{delim/cc}; on MLton,
our replay-based technique is the best direct-style implementation of
non-determinism.

Finally, we wrote a Prolog version of N-queens (using Prolog's
direct-style backtracking search, not constraint solving). We were
surprised to find out that it is slower than all other
implementations. For N=13, GNU Prolog takes around 20s, while our
slowest ML versions run in 14s. Prolog may have efficient support for
backtracking search, but it seems disadvantaged by being
a dynamically-typed language without an aggressive JIT
implementation.

Our conclusion is that for computations dominated by nondeterministic
search, replay-based approaches are surprisingly practical: they offer
reasonable performances compared to other approaches to direct-style
nondeterminism that are considered practical.

\subsection{More benchmarks}

More benchmarks in different monads are included in Appendix \ref{app:parsing-benchmarks}.

\section{But Isn't This Impossible?}
\label{sec:compare-theoretical}

%

In a 1990 paper, Matthias Felleisen presented formal notions of expressibility and macro-expressibility of one language feature in terms of others, along with proof techniques to show a feature cannot be expressed.\scite{felleisen1990expressive} Hayo Thielecke used these to show that exceptions and state together cannot macro-express continuations.\scite{thielecke2001contrasting} This is concerning, because, at first glance, this is exactly what we did.

First, a quick review of Felleisen's concepts: Expressibility and macro-expressibility help define what should be considered core to a language, and what is mere "syntactic sugar." An \emph{expression} is a translation from a language $\mathcal{L}$ containing a feature $\mathcal{F}$ to a language $\mathcal{L}^\prime$ without it which preserves program semantics. A key restriction is that an expression may only rewrite AST nodes that implement $\mathcal{F}$ and the descendants of these nodes. So, an expression of state may only rewrite assignments, dereferences, and expressions that allocate reference cells. Whle there is a whole-program transformation that turns stateful programs into pure programs, namely by threading a "state" variable throughout the entire program, this transformation is not an expression. A macro-expression is an expression which may rewrite nodes from $\mathcal{F}$, but may only move or copy the children of such nodes (technically speaking, it must be a term homomorphism). A classic example of a macro-expression is implementing the \code{+=} operator in terms of normal assignment and addition. A classic example of an expression which is not a macro-expression is desugaring \code{for}-loops into \code{while}-loops (it must dig into the loop body and modify every \code{continue} statement). Another one is implementing Java \code{finally} blocks (which need to execute an action after every \code{return} statement).

It turns out that Thielecke's proof does not immediately apply. First, it concerns \code{call/cc}-style continuations rather than delimited continuations. This seems like a minor limitation, since delimited continuations can be use to implement \code{call/cc}. But the second reason is more fundamental.

Thielecke's proof is based on showing that, in a language with exceptions and state but not continuations, all expressions of the following form with different $j$ are operationally equivalent:

\begin{equation*}
R_j = \lambda f.((\lambda x.\lambda y.(f\; 0;\; x:=\; !y; y:=j; !x)) (\text{ref}\; 0) (\text{ref}\; 0))
\end{equation*}

The intuition behind this equivalence is that the two reference cells are allocated locally and then discarded, and so the value stored in them can never be observed. However, with continuations, on the other hand, $f$ could cause the two assignments to run twice on the same reference cells.

This example breaks down because it cannot be expressed in our monadic reflection framework as is. The monadic reflection framework assumes there are no other effects within the program other than the ones implemented via monadic reflection. To write the $R_j$ using thermometer continuations and monadic reflection, the uses of \code{ref} must be changed from the native SML version to one implemented using the state monad. Then, when the computation is replayed, repeated calls to \code{ref} may return the same reference cell, allowing the state to escape, thereby allowing different $R_j$ to be distinguished. What this means is, because thermometer continuations require rewriting uses of mutable state, and not only uses of \code{shift} and \code{reset}, they are not a macro-expression. So, Thielecke's theorem does not apply.

\section{Related Work}

Our work is most heavily based on Filinski's work expressing monads using delimited control.\scite{Filinski94} We have also discussed theoretical results regarding the inter-expressibility of exceptions and continuations in Section \ref{sec:compare-theoretical}. Other work on implementing continuations using exceptions relate the two from a runtime-engineering perspective and from a typing perspective.

\paragraph{Continuations from stack inspection}

Oleg Kiselyov's \emph{delimcc} library\scite{kiselyov2010delimited} provides an implementation of delimited control for OCaml, based on the insight that the stack-unwinding facilities used to implement exceptions are also useful in implementing delimited control. Unlike our approach, \emph{delimcc} works by tightly integrating with the OCaml runtime, exposing low-level details of its virtual machine to user code. Its implementation relies on copying portions of the stack into a data structure, repurposing functionality used for recovering from stack overflows. It hence would not work for e.g.: many implementations of Java, which recover from stack overflows by simply deleting activation records. On the other hand, its low-level implementation makes it efficient and lets it persist delimited continuations to disk. A similar insight is used by Pettyjohn et al\scite{pettyjohn2005continuations} to implement continuations using a global program transformation.

\paragraph{Replay-based web programming} WASH Server Pages\scite{DBLP:conf/flops/Thiemann06} is a web framework that uses replay to persist state across multiple requests. In WASH, a single function can both display an HTML form to the end user, and take an action based on the user's response: the function is run twice, with the user's response stored in a replay log. MFlow ,\scite{mflow} another web framework, works along similar principles. Our work shows how the replay logs used by these frameworks are equivalent to serializing a continuation, revealing the connection between MFlow/WASH and continuation-based web frameworks.

\paragraph{Typing power of exceptions vs. continuations}
Lillibridge\scite{lillibridge1999unchecked} shows that exceptions introduce a typing loophole that can be used to implement unbounded loops in otherwise strongly-normalizing languages, while continuations cannot do this, giving the slogan "Exceptions are strictly more powerful than call/cc." As noted by other authors,\scite{thielecke2001contrasting} this argument only concerns the typing of exceptions rather than their execution semantics, and is inapplicable in languages that already have recursion.

\section{Conclusion}

Filinski's original construction of monadic reflection from delimited continuations, and delimited continuations from normal continuations plus state, provided a new way to program for the small fraction of languages which support first-class continuations. With our demonstration that exceptions and state are sufficient, this capability is extended to a large number of popular languages, including 9 of the TIOBE 10\scite{tiobe10} (all but C). While languages like Haskell with syntactic support for monads may not benefit from this construction, bringing advanced monadic effects to more languages paves the way for ideas spawned in the functional programming community to influence a broader population.

In fact, the roots of this paper came from an attempt to make one of the benefits of monads more accessible. We built a framework for Java where a user could write something that looks like a normal interpreter for a language, but, executed differently, it would become a flow-graph generator, a static analyzer, a compiler, etc. Darais\scite{darais2017abstracting} showed that this could be done by writing an interpreter in the monadic style (concrete interpreters run programs directly; abstract interpreters run them nondeterministically). We discovered this concurrently with Darais, and then discovered replay-based nondeterminism so that Java programmers could write normal, non-monadic programs.

Despite the apparent inefficiency of thermometer continuations, the surprisingly good performance results of Section~\ref{sec:benchmarks}, combined with the oft-unused speed of modern machines, provide hope that the ideas of this paper can find their way into practical applications. Indeed, Filinski's construction is actually known as a way to make programs \emph{faster}. \scite{hinze2012kan}

Overall, we view finding a way to bring delimited control into mainstream languages as a significant achievement. We hope to see a flourishing of work with advanced effects now that they can be used by more programmers.

Working code for all examples and benchmarks, as well as the CPS-bind and direct-return optimizations, is available from \url{https://github.com/jkoppel/thermometer-continuations} .

\appendix

\section{Is this possibly correct?}

\label{sec:correctness}

In this section, we provide a correctness proof of our construction in
the simplest possible case: the replay-based implementation of
non-deterministic computations with a two-argument \code{choose x y}
effect, in a simplistic toy language.

\newcommand{\gramdef}{\mathrel{\mathord{:}\mathord{:=}}}
\newcommand{\letin}[3]{\mathsf{let}~#1\,=\,#2~\mathsf{in}~#3}
\newcommand{\amb}[2]{\mathsf{choose}~#1~#2}
\newcommand{\su}[1]{\mathsf{S}~#1}
\newcommand{\eqdef}{\stackrel{\mathsf{def}}{=}}

We consider programs with the following grammar. Notice that, for
simplicity, we only allow our choice operator $\amb x y$ to take
variables as parameters, rather than arbitrary expressions.
\[
  \begin{array}{l@{~}r@{~}l@{\quad}l}
  t,u & \gramdef &            & \mbox{terms} \\
    & \mid & x, y, z...       & \mbox{variables} \\
    & \mid & n \in \mathbb{N} & \mbox{number constant} \\
    & \mid & \su t            & \mbox{successor} \\
    & \mid & \letin x t u     & \mbox{let-binding} \\
    & \mid & \amb x y         & \mbox{choice} \\
  \end{array}
\]

\newcommand{\cont}[1]{\textcolor{blue}{#1}}
\newcommand{\term}[1]{\textcolor{black}{#1}}

\newcommand{\tsu}[1]{\term{\su{\term{#1}}}}
\newcommand{\csu}[1]{\cont{\su{\cont{#1}}}}
\newcommand{\tletin}[3]{\term{\letin{\term{#1}}{\term{#2}}{\term{#3}}}}
\newcommand{\cletin}[3]{\cont{\letin{\term{#1}}{\square}{(\term{#3},\cont{#2})}}}

\newcommand{\halt}{\cont{\mathsf{halt}}}

\newcommand{\mcont}[4]{(#1, #2, #3, #4)}
\newcommand{\mcontinit}[1]{\mcont {#1} \halt \emptyset \emptyset}

\noindent To make the execution semantics of non-deterministic choice precise,
we define a notion of abstract machine, the \emph{continuation
  machines}. Continuation machines are quadruples
$\mcont t {\cont{K}} {\cont{s}} R$ of a term, a current \emph{machine
  continuation} $\cont{K}$, a \emph{soup of threads} $\cont{s}$ that
are waiting to execute next, and a \emph{result} $R$, which is a list
of integers. Machine continuations are given by the following very
simple grammar, and thread soups $\cont{s}$ are lists of pairs
$(t, \cont{K})$ of a term and a continuation.
\[
  \begin{array}{l@{~}r@{~}l@{\quad}l}
  \cont{K} & \gramdef &            & \mbox{machine stacks} \\
    & \mid & \csu K                & \mbox{successor} \\
    & \mid & \cletin x K t         & \mbox{binding} \\
    & \mid & \halt                 & \mbox{halt instruction} \\
  \end{array}
  \qquad
  \begin{array}{l}
    \cont{s} \quad\gramdef\quad \emptyset \;\mid\; (t, \cont{K}) . \cont{s} \\
    R \quad\gramdef\quad \emptyset \;\mid\; n . R \\
  \end{array}
\]

\newcommand{\contrew}{\mathrel{\cont{\rightarrow}}}
\newcommand{\subst}[3]{#1[#2\leftarrow#3]}

\noindent We can now define a reduction relation
$\mcont t {\cont K} {\cont{s'}} R
\contrew
\mcont {t'} {\cont{K'}} {\cont{s'}} {R'}$ to
formally define the execution of those non-deterministic programs.  If
$\mcontinit t$ eventually reduces to
$\mcont n \halt \emptyset R$ for $n \in \mathbb{N}$, then the meaning of
the program is the result $n.R$.
\begin{mathpar}
  \begin{array}{l@{\quad\contrew\quad}l}
    \mcont {\tsu t} {\cont K} {\cont s} R & \mcont t {\csu K} {\cont s} R \\
    \mcont n {\csu K} {\cont s} R & \mcont {n+1} {\cont K} {\cont s} R \\

    \mcont {\tletin x t u} {\cont K} {\cont s} R & \mcont t {\cletin x K u} {\cont s} R \\
    \mcont n {\cletin x K u} {\cont s} R & \mcont {\subst u x n} {\cont K} {\cont s} R \\
  \end{array}

  \begin{array}{l@{\quad\contrew\quad}l}
    \mcont {\amb {n_1} {n_2}} {\cont K} {\cont s} R
    & \mcont {n_1} {\cont K} {(n_2, \cont{K}).\cont s} R \\

    \mcont n \halt {(n', \cont K).\cont s} R
    & \mcont {n'} {\cont K} {\cont s} {n.R} \\
  \end{array}
\end{mathpar}
The two last rules give the semantics of parallelism:
$\amb {n_1} {n_2}$ returns $n_1$, but adds the thread
$(n_2, \cont{K})$ to the soup, to be executed later, and the
$\halt$ rule adds a number to the result set and restarts the
next thread in the soup. Note that the rule for $\amb {n_1} {n_2}$
duplicates the continuation $\cont K$, pushing a copy of it on the
thread soup: this machine relies on a runtime that can duplicate
continuations.

\newcommand{\hist}[1]{\textcolor{red}{#1}}
\newcommand{\histrew}{\mathrel{\hist{\to}}}
\newcommand{\nexthist}[1]{#1 \hist{+ 1}}

\newcommand{\mhist}[6]{(#1, #2, #3, #4, #5)_{#6}}
\newcommand{\mhistinit}[1]{\mhist {#1} \halt \emptyset \emptyset \emptyset {#1}}

Our correctness argument comes from defining an alternative notion of
abstract machines for these programs, the \emph{history machines},
that model our replay-based technique. History machines are of the
form $\mhist t {\cont K} {\hist P} {\hist F} R u$, replacing the soup
$\cont s$ with a \emph{past} $\hist P$ and a \emph{future}
$\hist F$ which are just ordered lists of choices $\hist i$ --
numbers in $\{1, 2\}$. Finally, $u$ is the initial computation that
gets replayed; it is constant over the whole execution. We also define
an operation $\nexthist P$ corresponding to the \code{next_path}
function of Section~\ref{subsec:list-many}, which gets stuck on an
empty path: $\nexthist \emptyset$ is just $\nexthist \emptyset$.
\begin{mathpar}
  \begin{array}{l@{\quad\gramdef\quad}l}
    \hist{i} & 1 \mid 2 \\
  \end{array}

  \begin{array}{l@{\quad\gramdef\quad}l}
    \hist{P} & \emptyset \mid \hist{P}.\hist{i} \\
    \hist{F} & \emptyset \mid \hist{i}.\hist{F} \\
  \end{array}

  \begin{array}{l@{\quad\eqdef\quad}l}
    \nexthist{\hist{P}.\hist 1} & \hist{P}.2 \\
    \nexthist{\hist{P}.\hist 2} & \nexthist{\hist{P}} \\
  \end{array}
\\
  \begin{array}{l@{\;\histrew\;}l}
    \mhist {\tsu t} {\cont K} {\hist P} {\hist F} R u
    & \mhist t {\csu K} {\hist P} {\hist F} R u \\
    \mhist n {\csu K} {\hist P} {\hist F} R u
    & \mhist {n+1} {\cont K} {\hist P} {\hist F} R u \\

    \mhist {\tletin x t {t'}} {\cont K} {\hist P} {\hist F} R u
    & \mhist t {\cletin x K {t'}} {\hist P} {\hist F} R u \\
    \mhist n {\cletin x K {t'}} {\hist P} {\hist F} R u
    & \mhist {\subst {t'} x n} {\cont K} {\hist P} {\hist F} R u \\
  \end{array}

  \begin{array}{l@{\quad\histrew\quad}l}
    \mhist {\amb {n_1} {n_2}} {\cont K} {\hist P} \emptyset R u
    & \mhist {\amb {n_1} {n_2}} {\cont K} {\hist P} {1.\emptyset} R u \\

    \mhist {\amb {n_1} {n_2}} {\cont K} {\hist P} {(\hist i.\hist F)} R u
    & \mhist {n_{\hist i}} {\cont K} {(\hist P.\hist i)} {\hist F} R u \\

    \mhist n \halt {\hist P} \emptyset R u
    & \mhist u \halt \emptyset {\nexthist{\hist P}} {n.R} u \\
  \end{array}
\end{mathpar}
When we reach an instruction $\amb{n_1}{n_2}$, the machine picks the
value $n_{\hist i}$ (for $\hist i \in \{1, 2\}$) determined by its
future $\hist{F}$. When it finishes computing a value $n$ with past
history $\hist{P}$, it restarts the whole computation $u$ with future
$\nexthist{\hist{P}}$, exactly like our implementation. The semantics
of a term $t$ for these machines is given by the result $R$ such that
$\mhistinit t$ reduces to
$\mhist t \halt \emptyset {\nexthist\emptyset} R t$.

Remarkably, this machine never duplicates a continuation $\cont K$;
it can be implemented in language lacking this runtime capability. It
is also interesting to compare how $\cont K$ and
$(\hist P, \hist F)$ grow over time: $\cont K$ grows at each
reduction, including of non-effectful operations, while
$(\hist P, \hist F)$ only changes at the call sites of
$\amb {n_1} {n_2}$ and on $\halt$ -- it only records the effects in
the computation tree. Tracking the value of $\cont K$ in user code
would require inserting code around \emph{all}
operations\footnote{This typically requires compiler support. It can
  also be done with rich enough a macro system -- which we don't
  assume or require here. Indeed, you can use macros to redefine each
  operation of the language to first record the operation in a log and
  then perform the operation; a continuation can then be represented
  as the relevant fragment of the log. This technique has been
  mentioned to us, in private communication, by Matthias Felleisen.},
while $(\hist P, \hist F)$ can be tracked by
clever implementations of the effectful operators only.

\newcommand{\combrew}{\to}
\newcommand{\mcomb}[7]{(#1, {#2}_{#3}, #4, #5, #6)_{#7}}
\newcommand{\mcombinit}[1]{\mcomb {#1} \halt \emptyset \emptyset \emptyset \emptyset {#1}}

Correctness of the history machine means that the second reduction
semantics produces the same results as the first one. One way to do
this is to convince ourselves that the reductions of both machine
embed in the following \emph{combined} machines, where machine
continuations $\cont K$ are annotated with a past $\hist P$ --
including in the soup -- and the machine also holds a future
$\hist F$:

\begin{mathpar}
  \begin{array}{l@{\;\to\;}l}
    \mcomb {\tsu t} {\cont K} {\hist P} {\hist F} {\cont s} R u
    & \mcomb t {(\csu K)} {\hist P} {\hist F} {\cont s} R u \\
    \mcomb n {(\csu K)} {\hist P} {\hist F} {\cont s} R u
    & \mcomb {n+1} {\cont K} {\hist P} {\hist F} {\cont s} R u \\

    \mcomb {\tletin x t {t'}} {\cont K} {\hist P} {\hist F} {\cont s} R u
    & \mcomb t {(\cletin x K {t'})} {\hist P} {\hist F} {\cont s} R u \\
    \mcomb n {(\cletin x K {t'})} {\hist P} {\hist F} {\cont s} R u
    & \mcomb {\subst {t'} x n} {\cont K} {\hist P} {\hist F} {\cont s} R u \\
  \end{array}

  \begin{array}{l@{\quad\to\quad}l}
    \mcomb {\amb {n_1} {n_2}} {\cont K} {\hist P} \emptyset {\cont s} R u
    & \mcomb {n_1} {\cont K} {\hist P . \hist 1} \emptyset
        {(n_2, \cont{K}_{\hist P . \hist 2}).\cont{s}} R u \\

    \mcomb n \halt {\hist P} \emptyset {(n', \cont{K}_{\hist{P'}}).\cont{s}} R u
    & \mcomb {n'} {\cont K} {\hist{P'}} \emptyset {\cont s} {n.R} u \\

    \mcomb {\amb {n_1} {n_2}} {\cont K} {\hist P} {\hist i . \hist F} {\cont s} R u
    & \mcomb{n_{\hist i}} {\cont K} {\hist P . \hist i} {\hist F} {\cont s} R u \\
  \end{array}
\end{mathpar}
For the easy rules (the first four), if you erase histories you get
valid continuation reductions, and if you erase soups you get valid
history reductions -- both machines coincide exactly on those
reductions. The subtleties come from the hard rules, the second pack.

The first hard rule is a valid continuation rule, and corresponds to
the composition of two valid history rules. The second hard rule is
a valid continuation rule, but in the history system it corresponds to
the $\halt$ rule followed by an arbitrarily-long series of replay
computations, which use the third hard rule (they are the only part of
the simulation argument to use this rule). More precisely, we claim
that the following sequence, \emph{after} erasing into history
machines, is a valid sequence of history reductions.
\footnote{As given, it is not a sequence of reductions for
  the combined machine, but this could be obtained by changing the
  reduction rule for
  $\mcomb n \halt {\hist P} \emptyset {(n', \cont{K}_{\hist{P'}}).\cont{s}} R u$
  to reduce to
  $\mcomb u \halt \emptyset {\hist P'} {\cont s} {n.R} u$
  instead of
  $\mcomb {n'} {\cont K} {\hist{P'}} \emptyset {\cont s} {n.R} u$.
  The fact that these two reduction choices are equivalent, from the
  point of view of the final result, is precisely the Replay
  Theorem~\ref{thm:replay}.}

\begin{mathpar}
   \mcomb n \halt {\hist P} \emptyset {(n', \cont{K}_{\hist{P'}}).\cont{s}} R u

   \histrew

   \mcomb u \halt \emptyset {\hist P'} {\cont s} {n.R} u

   \histrew^*

   \mcomb {n'} {\cont K} {\hist{P'}} \emptyset {\cont s} {n.R} u
\end{mathpar}

\noindent We now establish the technical results that show that this is indeed
a valid history reduction sequence. The first step looks like the
third history reduction rule, except that a seemingly-arbitrary past
$\hist{P'}$ taken from the soup replaces $\nexthist{\hist{P}}$. We
show that a $\hist{P'}$ at the head of the soup must in fact be equal
to $\nexthist{\hist{P}}$ -- this is part of the Timeline Invariant
Lemma~\ref{lem:timeline}. The second step corresponds to replaying the
computation that reached $(n', \cont{K}_{\hist{P'}})$; its correctness
is established by the Replay Theorem~\ref{thm:replay}.

\begin{lemma}[Timeline invariant]
\label{lem:timeline}
  For any combined machine obtained by reducing
  a valid initial configuration, the first history at the top of the
  soup is the successor of the continuation's history, and the second
  history in the soup is the successor of the first, etc.
  Formally:
  \begin{enumerate}
  \item For any configuration of the form
    $\mcomb n {\cont K} {\hist P} \emptyset {(n', \cont{K'}_{\hist{P'}}).\cont{s}} R u$,
    we have $\hist{P'} = \nexthist{\hist P}$.
  \item For any (sub)-soup of the form
    $(n,\cont{K}_{\hist P}).(n',\cont{K'}_{\hist{P'}}).\cont{s}$
    occurring inside a configuration, we have
    $\hist{P'} = \nexthist{\hist P}$.
  \end{enumerate}
\end{lemma}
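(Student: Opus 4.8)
The plan is to prove, by induction on the length of the reduction from the initial configuration $\mcombinit u$, a single strengthened invariant from which both clauses follow at once. First I would record a trivial fact that eliminates one rule from consideration: the future stays empty throughout. The initial configuration has future $\emptyset$, the four structural rules pass the future through unchanged, and the two combined rules that fire on $\amb{n_1}{n_2}$ (with empty future) and on $\halt$ both produce an empty future again. Hence the third hard rule, which requires a non-empty future, never fires in a reachable combined configuration, and the $\emptyset$ sitting in the future slot of clause 1 is automatically present. I therefore only need to analyze the four structural rules together with the first two hard rules.

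The invariant I would carry is that the pasts annotating the soup form a chain under the next-path map $\hist P \mapsto \nexthist{\hist P}$, anchored at the current past: writing a reachable configuration as $\mcomb {t} {\cont K} {\hist P} {\emptyset} {(n_1,\cont{K}_{\hist{P_1}}).\cdots.(n_m,\cont{K}_{\hist{P_m}})} {R} {u}$, I require $\hist{P_1} = \nexthist{\hist P}$ together with $\hist{P_{j+1}} = \nexthist{\hist{P_j}}$ for each $1 \le j < m$. Clause 1 of the lemma is the first equation and clause 2 is the family of later ones, so both are immediate consequences once the invariant is established. The essential point is that the invariant must be asserted at \emph{every} reachable configuration, not only those whose current term is a value $n$, because the inductive step will need it at a configuration whose term is $\amb{n_1}{n_2}$. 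The base case $\mcombinit u$ has an empty soup, so the invariant holds vacuously, which also sidesteps the fact that $\nexthist{\emptyset}$ is undefined. The structural reductions never touch $\hist P$ or the soup, so they preserve the invariant immediately.

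The substantive work is in the two hard rules. For the $\halt$ rule the current past is overwritten by the head's past $\hist{P'} = \hist{P_1}$ and that head is popped; the new head is the old second frame, whose past is $\hist{P_2} = \nexthist{\hist{P_1}}$ by the interior of the old chain, which is exactly clause 1 for the successor configuration, and the remaining links are inherited unchanged. For the first $\amb{n_1}{n_2}$ rule the current past $\hist P$ becomes $\hist P.\hist 1$ while a fresh frame $(n_2,\cont{K}_{\hist P.\hist 2})$ is pushed on top of the old soup. The new head is correct because $\nexthist{\hist P.\hist 1} = \hist P.\hist 2$, and the new link running from the pushed frame down to the old head $\hist{P_1}$ is correct because $\nexthist{\hist P.\hist 2} = \nexthist{\hist P}$, which by the induction hypothesis (clause 1 at the pre-configuration, where the term is $\amb{n_1}{n_2}$) equals $\hist{P_1}$; the deeper links are untouched.

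I expect the main obstacle to be discovering this strengthening rather than discharging any individual case. Neither clause is inductive on its own: preserving the interior chain across the push in the $\amb{n_1}{n_2}$ rule forces knowledge of the old head's past in terms of the current past, which is precisely clause 1, while clause 1 must itself be available at a non-value configuration. The algebraic fact that makes the telescoping work is the identity $\nexthist{\hist P.\hist 2} = \nexthist{\hist P}$ built into the definition of next-path: it lets the chain "skip over" the discarded right branch exactly when a new left branch is pushed. I would isolate that identity, together with its companion $\nexthist{\hist P.\hist 1} = \hist P.\hist 2$, before the case analysis, after which each case falls out mechanically.
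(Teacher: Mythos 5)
Your proof is correct and follows essentially the same route as the paper: induction on the combined reduction sequence, preserving clauses (1) and (2) together as a single chained invariant, with the only substantive case being the first hard rule, discharged via the identities $\nexthist{\hist P.\hist 1} = \hist P.\hist 2$ and $\nexthist{\hist P.\hist 2} = \nexthist{\hist P}$. You are somewhat more explicit than the paper in two useful respects --- observing that the future stays empty along any reachable combined reduction (so the third hard rule never fires), and stating clause (1) at arbitrary reachable terms rather than only at values, which the paper's proof uses implicitly when it invokes the hypothesis at an $\amb{n_1}{n_2}$ configuration --- but these are refinements of the same argument, not a different one.
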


\begin{proof}
  All combined reductions preserve this invariant. The only
  non-trivial case being the first hard rule, which adds a new
  history-annotated continuation to the soup.
  \begin{mathpar}
    \mcomb {\amb {n_1} {n_2}} {\cont K} {\hist P} \emptyset {\cont s} R u

    \to

    \mcomb {n_1} {\cont K} {\hist P . \hist 1} \emptyset
      {(n_2, \cont{K}_{\hist P . 2}).\cont{s}} R u
  \end{mathpar}
  By assumption on the input configuration, the first history on top
  of the soup ${\cont s}$ (if any) is $\nexthist{\hist P}$. This
  is also equal to $\nexthist{(\hist P.\hist 2)}$, so the new
  soup $(n_2, \cont{K}_{\hist P . \hist 2}).\cont{s}$ respects
  the invariant (2). Furthermore, the new history on the top of the
  soup, $\hist P . \hist 2$, is indeed the successor of the
  history of the new annotated continuation,
  $\cont{K}_{\hist P . \hist 1}$, so the invariant (1) is also respected.
\end{proof}

\newcommand{\topure}{\to_{\mathsf{pure}}}
\newcommand{\init}{\mathsf{init}}
\newcommand{\replay}{\mathsf{replay}}

\begin{definition}[Pure reduction]
  We define the \emph{pure} reductions as the subset of combined
  machine reductions that do not depend on the soup $\cont{s}$ or
  result $R$ -- the four easy rules, plus the last hard rule. We write
  $(\topure)$ for pure reductions.
\end{definition}

Notice that pure reduction sequences are all of the following form
\footnote{We write $\hist P . \hist F$ and $\hist P . \hist P'$ and $\hist F . \hist F'$ for the concatenation of histories.}
\begin{mathpar}
  \mcomb {t_1} {(\cont{K_1})} {\hist P} {\hist{P'}.\hist{F}} {\cont s} R u

  \topure^*

  \mcomb {t_2} {(\cont{K_2})} {\hist{P}.\hist{P'}} {\hist F} {\cont s} R u
\end{mathpar}
\noindent The soup and results are unchanged, and a fragment of history moves
from the future to the past.

\begin{lemma}[Monotonicity]
  \label{lem:monotonicity}
  Pure reductions are insensitive to changes of soups and results,
  and they are preserved by appending to the future.
  If
  $\mcomb {t_1} {\cont{K_1}} {\hist P} {(\hist{P'}.\hist{F})} {\cont s} R u
  \topure^*
  \mcomb {t_2}  {\cont{K_2}} {(\hist{P}.\hist{P'})} {\hist F} {\cont s} R u$,
  then for any $\cont{s'}$, $R'$ and $\hist{F'}$ we also have
  $\mcomb {t_1} {\cont{K_1}}
    {\hist P} {(\hist{P'}.\hist{F}.\hist{F'})} {\cont{s'}} {R'} u
  \topure^*
  \mcomb {t_2} {\cont{K_2}}
    {(\hist{P}.\hist{P'})} {(\hist{F}.\hist{F'})} {\cont{s'}}  {R'} u$.
\end{lemma}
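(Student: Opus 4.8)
The plan is to factor the multi-step statement through a single-step lemma and then lift it by induction on the length of the reduction. The heart of the matter is the observation that makes the word \emph{pure} meaningful: none of the pure rules ever reads or writes the soup $\cont s$ or the result $R$, and each consumes at most the \emph{head} of the future (the four easy rules leave the future untouched, while the last hard rule, firing on $\amb {n_1} {n_2}$ with a future $\hist i.\hist F$, strips the single leading choice $\hist i$ and never inspects what lies behind it). Both clauses of the lemma --- insensitivity to soups and results, and stability under appending to the future --- are instances of this locality, so I would package them together.

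Concretely, for a configuration $C = \mcomb t {\cont K} {\hist P} {\hist{F_C}} {\cont s} R u$ and fixed $\cont{s'}, R', \hist{F'}$, let $C^+ \eqdef \mcomb t {\cont K} {\hist P} {\hist{F_C}.\hist{F'}} {\cont{s'}} {R'} u$ denote the configuration with the soup and result replaced and $\hist{F'}$ appended to the future. The single-step lemma I would prove is: whenever $C \topure D$, also $C^+ \topure D^+$. This is a short case analysis on which rule fires. For each of the four easy rules the future is unchanged, so $\hist{F_D} = \hist{F_C}$ and the same rule fires on $C^+$ --- its applicability depends only on $t$ and $\cont K$, never on the soup, result, or future --- yielding $D^+$. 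For the last hard rule $\hist{F_C} = \hist i.\hist{F_D}$, hence $\hist{F_C}.\hist{F'} = \hist i.(\hist{F_D}.\hist{F'})$; the rule again fires on $C^+$, consumes the leading $\hist i$, and leaves the future $\hist{F_D}.\hist{F'}$, i.e. produces exactly $D^+$. (That $D$ inherits $\cont s$ and $R$ from $C$ is what lets me use the same $\cont{s'}, R'$ on both sides.)

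I would then lift this to $\topure^*$ by a one-line induction on the number of steps: applying $(\cdot)^+$ to each step of $C_0 \topure C_1 \topure \cdots \topure C_k$ turns it into $C_j^+ \topure C_{j+1}^+$, so $C_0^+ \topure^* C_k^+$. Instantiating with $C_0 = \mcomb {t_1} {\cont{K_1}} {\hist P} {\hist{P'}.\hist{F}} {\cont s} R u$ and $C_k = \mcomb {t_2} {\cont{K_2}} {\hist{P}.\hist{P'}} {\hist F} {\cont s} R u$, and simplifying the futures by associativity of concatenation ($(\hist{P'}.\hist{F}).\hist{F'} = \hist{P'}.\hist{F}.\hist{F'}$ on the source, and $\hist F.\hist{F'}$ on the target), gives precisely the two endpoints in the statement.

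I do not expect a genuinely hard step here. The only thing that could go wrong is bookkeeping with the histories, since pure rules consume from the \emph{front} of the future while $\hist{F'}$ is glued to the \emph{back}; the proof works exactly because these two operations commute, which is why I would state the single-step lemma so that appending at the back survives each front-consuming step untouched.
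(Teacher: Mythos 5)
Your proof is correct: the single-step case analysis (the four easy rules never touch the future, soup, or result, and the last hard rule only consumes the head of the future, so appending at the back and replacing the soup/result commute with each step) followed by induction on the length of the sequence is exactly the right argument, and you correctly note that the impure rules need not be considered since the hypothesis already restricts to pure steps. The paper itself states this lemma without proof, treating it as immediate from inspection of the rules, and your write-up is precisely the inspection the authors are implicitly relying on.
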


\begin{definition}[Replay configuration $\replay(c)$]
  We define the \emph{replay configuration} of an arbitrary configuration by
  $
    \replay{\mcomb t {\cont K} {\hist P} {\hist F} {\cont s} R u}
    \eqdef
    \mcomb u \halt \emptyset {(\hist{P}.\hist{F})} {\cont s} R u
  $.
\end{definition}

\begin{theorem}[Replay]
\label{thm:replay}
  If a combined configuration $c$ is reachable from an initial
  configuration, then its replay configuration $\replay(c)$
  reduces back to $c$ using only pure steps:
  \begin{mathpar}
    (\mcombinit u \combrew^* c) \quad\implies\quad (\replay(c) \topure^* c))
  \end{mathpar}
\end{theorem}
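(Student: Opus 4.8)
The plan is to prove the statement by induction on the length of the reduction sequence $\mcombinit u \combrew^* c$, establishing $\replay(c) \topure^* c$ together with a strengthening of the hypothesis that I will explain below. The four easy rules and the last hard rule are by definition pure steps, so whenever the final reduction $c \combrew c'$ happens to be one of these, I get the conclusion almost for free: each of these reductions leaves the soup $\cont s$, the result $R$, and the concatenation $\hist P . \hist F$ unchanged (the last hard rule merely slides one choice $\hist i$ off the front of the future and onto the end of the past, which reassociates the concatenation), so $\replay(c') = \replay(c)$, and I simply append the single pure step $c \topure c'$ to the inductive hypothesis $\replay(c) \topure^* c$.

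The two substantive cases are the non-pure hard rules. For the first hard rule, $c = \mcomb{\amb{n_1}{n_2}}{\cont K}{\hist P}{\emptyset}{\cont s} R u$ reduces to $c' = \mcomb{n_1}{\cont K}{\hist P . \hist 1}{\emptyset}{(n_2, \cont K_{\hist P . \hist 2}).\cont s} R u$. Here I start from the hypothesis $\replay(c) \topure^* c$ and invoke Monotonicity (Lemma~\ref{lem:monotonicity}) to append $\hist 1$ to the future and to swap in the enlarged soup; the left endpoint of the resulting pure sequence is exactly $\replay(c')$, and its right endpoint fires the last hard rule in one more pure step to land on $c'$, giving $\replay(c') \topure^* c'$.

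The main obstacle is the second hard rule, $c = \mcomb n \halt {\hist P}{\emptyset}{(n', \cont K_{\hist{P'}}).\cont s} R u \combrew c' = \mcomb {n'}{\cont K}{\hist{P'}}{\emptyset}{\cont s}{n.R} u$, because the new current term and continuation are \emph{pulled out of the soup}: a plain inductive statement about $c$ alone says nothing about how to replay \emph{into} the thread $(n', \cont K_{\hist{P'}})$. The fix, which is the crux of the argument, is to strengthen the induction hypothesis so that it asserts replayability not only of the current configuration but of every thread resting in the soup, i.e. that for each $(n_i, \cont K^i_{\hist{P_i}})$ occurring in $\cont s$, the replay of the corresponding thread-configuration reduces purely back to it. By Monotonicity this replayability is insensitive to the soup lying below the thread and to the result, so it can be phrased as a statement about the triple $(n_i, \cont K^i_{\hist{P_i}}, \emptyset)$ in isolation. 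With this strengthening available, the second hard rule is dispatched by reading the head-of-soup thread's replayability off the hypothesis and then adjusting soup and result to $\cont s$ and $n.R$ via Monotonicity.

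It then remains to check that the strengthened invariant is itself preserved, and the only reduction that inserts a fresh thread into the soup is the first hard rule: the pushed thread is $(n_2, \cont K_{\hist P . \hist 2})$, whose replayability I justify by exactly the argument of the second paragraph, but appending $\hist 2$ rather than $\hist 1$ before firing the last hard rule. The second hard rule only shortens the soup, and the easy and last-hard rules leave it untouched, so in every case each thread of $c'$ is either inherited from $c$ (hence replayable by hypothesis) or freshly certified. I note finally that the Timeline Invariant (Lemma~\ref{lem:timeline}) is \emph{not} needed here: it governs the separate obligation of matching a pulled past $\hist{P'}$ against $\nexthist{\hist P}$ in the surrounding simulation, whereas the replayability established here is agnostic to which particular past the soup happens to supply.
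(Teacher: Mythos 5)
Your proof is correct, but it handles the crux case by a genuinely different device than the paper. The paper keeps the induction statement as literally stated in the theorem and, in the second impure case (popping a thread $(n', \cont{K}_{\hist{P'}})$ off the soup), reaches back into the reduction trace to locate the \emph{first hard rule} step that originally pushed that thread, applies the induction hypothesis to that earlier configuration (so its induction is really course-of-values over the trace, applied to a strict prefix rather than the immediate predecessor), and then uses Monotonicity to append $\hist 2$ to the future and adjust the result. You instead strengthen the induction hypothesis to assert replayability of \emph{every} thread resting in the soup, certify each thread at the moment it is pushed (by the same Monotonicity argument, appending $\hist 2$ instead of $\hist 1$), and then dispatch the pop case by reading the head thread's certificate directly. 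Your route buys a cleaner bookkeeping story: it only ever uses the hypothesis at the immediate predecessor, and it makes explicit the fact the paper leaves implicit --- that a thread sitting in the soup is never modified between push and pop, so the soup below it and its annotation are exactly as they were when it was created. The paper's route buys a shorter invariant (no extra clause about the soup) at the cost of the trace-scanning step. Your closing observation is also right: the Timeline Invariant (Lemma~\ref{lem:timeline}) is not logically needed for the Replay Theorem itself --- what matters is that the popped thread's annotation has the form $\hist{P_0}.\hist 2$ for the past $\hist{P_0}$ at which it was pushed, which follows from the shape of the first hard rule; the identification $\hist{P'} = \nexthist{\hist P}$ is only needed for the surrounding simulation argument (the footnote about the two equivalent reduction choices).
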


\begin{proof}
  The proof proceeds by induction on the combined reduction sequence from an
  initial state $\mcombinit u$ to an arbitrary configuration
  $\mcomb n {\cont K} {\hist P} {\hist F} {\cont s} R u$, building
  a corresponding sequence of combined reduction steps starting from its replay
  configuration $\mcomb u \halt \emptyset {(\hist{P}.\hist{F})} {\cont s} R u$.

  Either the reduction sequence is empty, in which case the result is
  immediate, or we consider its last reduction step
  $c' \combrew c$. By induction hypothesis, $c'$ is purely reachable from $\replay(c')$,
  and we have to transport this reduction $replay(c') \topure^* c'$ into
  a reduction $\replay(c) \topure^* c$.

  We reason by case analysis on the reduction $c' \combrew c$.
  If it is a pure reduction, we have $\replay(c') = \replay(c)$ and
  can conclude by replaying the reduction from $c'$ unchanged. The
  delicate cases correspond to the two impure reductions.

  \paragraph{First impure reduction}
  \begin{mathpar}
    \mcomb {\amb {n_1} {n_2}} {\cont K} {\hist P} \emptyset {\cont s} R u

    \to

    \mcomb {n_1} {\cont K} {\hist P . \hist 1} \emptyset
      {(n_2, \cont{K}_{\hist P . \hist 2}).\cont{s}} R u
  \end{mathpar}
  Let us define $\cont{s'}$ as $(n_2, \cont{K}_{\hist P . 2}).\cont{s}$.
  We want to show that
  $\mcomb {n_1} {\cont K} {\hist P . \hist 1} \emptyset {\cont s'} R u$
  is purely reachable from
  $\mcomb u \halt \emptyset {\hist P . \hist 1} {\cont s'} R u$.
  By induction hypothesis, we know that
  $\mcomb u \halt \emptyset {\hist P} {\cont s} R u$
  purely reaches
  $\mcomb {\amb {n_1} {n_2}} {\cont K} {\hist P} \emptyset {\cont s} R u$
  .
  By monotonicity (Lemma~\ref{lem:monotonicity}), we know
  that $\mcomb u \halt \emptyset {\hist P . \hist 1} {\cont{s'}} R u$ purely
  reaches
  $\mcomb {\amb {n_1} {n_2}} {\cont K} {\hist P} {\hist 1 . \emptyset} {\cont{s'}} R u$,
  and we conclude with a pure reduction step.
  This proof can be represented better in diagrammatic form:

  \begin{mathpar}
  \begin{tikzcd}
    \mcombinit u
    \arrow[to, r, ""{name=Init, below}, "*" at end]
    &
    \mcomb {\amb {n_1} {n_2}} {\cont K} {\hist P} \emptyset {\cont s} R u
    \arrow[to, r]
    \arrow[equal, d]
    &
    \mcomb {n_1} {\cont K} {\hist P . \hist 1} \emptyset {\cont{s'}} R u
    \arrow[equal, dd]
    \\
    \mcomb u \halt \emptyset {\hist P} {\cont s} R u
    \arrow[to, r, "\mathsf{pure}"'{name=IndB}, ""{name=Ind}, "*" at end]
    &
    \mcomb {\amb {n_1} {n_2}} {\cont K} {\hist P} \emptyset {\cont s} R u
    &
    \\
    \mcomb u \halt \emptyset {\hist P . \hist 1 . \emptyset} {\cont{s'}} R u
    \arrow[to, r, "\mathsf{pure}"', ""{name=Mon}, "*" at end]
    &
    \mcomb {\amb {n_1} {n_2}} {\cont K} {\hist P} {\hist 1 . \emptyset} {\cont{s'}} R u
    \arrow[to, r, "\mathsf{pure}"']
    &
    \mcomb {n_1} {\cont K} {\hist P . \hist 1} \emptyset {\cont{s'}} R u
    \arrow[mapsto, ll, bend left=10, "\text{replay}"]

    \arrow[Rightarrow,from=Init,to=Ind,"\text{ind. hyp.}"]
    \arrow[Rightarrow,from=IndB,to=Mon,"\text{monotonicity}"]
  \end{tikzcd}
  \end{mathpar}

  \paragraph{Second impure reduction}
  \begin{mathpar}
    \mcomb {n_1} \halt {\hist P} \emptyset {(n_2, \cont{K}_{\hist{P'}}).\cont{s}} R u

    \to

    \mcomb {n_2} {\cont K} {\hist{P'}} \emptyset {\cont s} {n_1.R} u
  \end{mathpar}

\noindent   By the Timeline Invariant (Lemma~\ref{lem:timeline}), we know that
  $\hist{P'} = \nexthist{\hist{P}}$. Let us pose $R' \eqdef {n_1}.R$.

  In the reduction sequence from the initial configuration to
  $\mcomb {n_1} \halt {\hist P} \emptyset {(n_2, \cont{K}_{\hist{P'}}).\cont{s}} R u$,
  let us consider the first reduction that pushed the paused computation
  $(n_2, \cont{K}_{\hist{P'}})$ onto the soup $\cont{s}$. This reduction
  can only be of the form
  \begin{mathpar}
    \mcomb {\amb {n_1} {n_2}} {\cont K} {\hist {P_0}} \emptyset {\cont s} R u
    \to
    \mcomb {n_1} {\cont K} {\hist {P_0} . \hist 1} \emptyset
        {(n_2, \cont{K}_{\hist {P_0} . \hist 2}).\cont{s}} R u
  \end{mathpar}
  for a certain history $\hist{P_0}$ such that
  $\hist{P_0}.\hist 2 = \nexthist{\hist{P}}$.

  The proof follows from these definitions by the following diagram:
  \begin{mathpar}
  \begin{tikzcd}
    \mcombinit u
    \arrow[to, r, ""{name=Init, below}, "*" at end]
    &
    \mcomb {\amb{n_1}{n_2}} {\cont K} {\hist {P_0}} \emptyset {\cont s} R u
    \arrow[to, r, "*" at end]
    \arrow[equal,d]
    &
    \mcomb {n_2} {\cont K} {\nexthist{\hist{P}}} \emptyset {\cont s} {n_1.R} u
    \arrow[equal,dd]
    \\
    \mcomb u \halt \emptyset {\hist {P_0}} {\cont s} R u
    \arrow[to, r, "\mathsf{pure}"'{name=IndB}, ""{name=Ind}, "*" at end]
    &
    \mcomb {\amb{n_1}{n_2}} {\cont K} {\hist {P_0}} \emptyset {\cont s} R u
    &
    \arrow[Rightarrow,from=Init,to=Ind,"\text{ind. hyp.}"]
    \\
    \mcomb u \halt \emptyset {\hist{P_0}.\hist 2.\emptyset} {\cont s} {R'} u
    \arrow[to, r, "\mathsf{pure}"', ""{name=Mon}, "*" at end]
    &
    \mcomb {\amb{n_1}{n_2}} {\cont K}
      {\hist{P_0}} {\hist 2.\emptyset} {\cont s} {R'} u
    \arrow[to,r, "\mathsf{pure}"']
    &
    \mcomb {n_2} {\cont K} {\hist{P_0}.\hist 2} \emptyset {\cont s} {R'} u
    \arrow[mapsto, ll, bend left=10, "\text{replay}"]

    \arrow[Rightarrow,from=Init,to=Ind,"\text{ind. hyp.}"]
    \arrow[Rightarrow,from=IndB,to=Mon,"\text{monotonicity}"]
  \end{tikzcd}
  \end{mathpar}
\end{proof}

\section{Parsing benchmarks}
\label{app:parsing-benchmarks}

To get a sense of the performance cost of thermometer continuations in
realistic effectful programs, we wrote direct-style versions of
monadic parsing programs.

There are three benchmarks, all implemented in SML: \textsc{INTPARSE-GLOB}, \textsc{INTPARSE-LOCAL}, and \textsc{MONADIC-ARITH-PARSE}. They use two different monads. Depending on the monad, we gave three to six implementations of each benchmark. Each \textbf{Indirect} implementation implements the program pure-functionally, without monadic effects. The \textbf{ThermoCont} and \textbf{Filinski} implementations use monadic reflection, implemented via thermometer continuations and Filinski's construction, respectively (using SML's native \texttt{call/cc} support). For the nondeterminism and failure monads, our optimizations apply, given in \textbf{Opt. ThermoCont}.

Numbers were collected using SML/NJ v110.80.\scite{appel1991standard} We could not port our implementation to MLTON,\scite{weeks2006whole} the whole-program optimizing compiler for SML, because it lacks \code{Unsafe.cast}. The experiments in this section only were conducted on a 2015 MacBook Pro with a 2.8 GHz Intel Core i7 processor. All times shown are the average of 5 trials, except for \textsc{MONADIC-ARITH-PARSE}, as discussed below.

The twin benchmarks \textsc{INTPARSE-GLOB} and \textsc{INTPARSE-LOCAL} both take a list of numbers as strings, parse each one, and return their sum. They both use a monadic failure effect (like Haskell's \code{Maybe} monad), and differ only in their treatment of  strings which are not valid numbers: \textsc{INTPARSE-GLOB} returns failure for the entire computation, while \textsc{INTPARSE-LOCAL} will recover from failure and ignore any malformed entry. Table \ref{table:parse-int-global} gives the results for \textsc{INTPARSE-GLOB}. For each input size $n$, we constructed both a list of $n$ valid integers, as well as one which contains an invalid string halfway through. Table \ref{table:parse-int-local} gives the results for \textsc{INTPARSE-LOCAL}. For each $n$, we constructed lists of $n$ strings where every $1/100$th, $1/10$th, or $1/2$nd string was not a valid int. For \textsc{INTPARSE-GLOB}, unoptimized thermometer continuations wins, as it avoids Filinski's cost of \code{call/cc}, the optimized version's reliance on closures, as well as the indirect approach's cost of wrapping and unwrapping results in an \code{option} type. For \textsc{INTPARSE-LOCAL}, unoptimized thermometer continuations lost out to the indirect implementation, as thermometer continuations here devolve into raising an exception for bad input, which is slower than wrapping a value in an option type.

Finally, benchmark \textsc{MONADIC-ARITH-PARSE} is a monadic parser in the style of Hutton and Meijer.\scite{hutton1998monadic} These programs input an arithmetic expression, and return the list of results of evaluating any {\it prefix} of the string which is itself a valid expression. The Filinski and ThermoCont implementations closely follow Hutton and Meijer, executing in a "parser" monad with both mutable state and nondeterminism (equivalent to Haskell's \code{StateT List} monad). The indirect implementation inlines the monad definition, passing around a list of (remaining input, parse result) pairs. We did not provide an implementation with optimized thermometer continations, as we have not yet found how to make our optimizations work with the state monad. Note that all three implementations use the same algorithm, which, although the code imeplementing it is beautiful (at least for the monadic versions), is exponentially slower than the standard LL/LR parsing algorithms.

Table \ref{table:arith-parse} reports the average running time of each implementation on 30 random arithmetic expressions with a fixed number of leaves. There was very high variance in the running time of different inputs, based on the ambiguity of prefixes of the expression. For inputs with 40 leaves, the fastest input took under 25ms for all implementations, while the slowest took approximately 25 minutes on the indirect implementation, 5 hours and 43 minutes for Filinski's construction, and 9 hours 50 minutes for thermometer continuations.

Overall, these benchmarks show that the optimizations of Section~\ref{sec:optimization} can provide a substantial speedup, and there are many computations for which thermometer continuations do not pose a prohibitive cost. Thermometer continuations are surprisingly competitive with Filinski's construction, even though SML/NJ is known for its efficient \code{call/cc}, and yet thermometer continuations can be used in far more programming languages.

\begin{table}
\begin{center}
\caption{Benchmark \textsc{INTPARSE-GLOB}}
\begin{tabular}{r|r|rrrrrr}
\hline
& \textbf{Bad input?} & \textbf{10,000} & \textbf{50,000} & \textbf{100,000} & \textbf{500,000} & \textbf{1,000,000} & \textbf{5,000,000} \\
 \hline
\multirow{2}{*}{\textbf{Indirect}} & Y & 0.001s & 0.004s & 0.026s & 0.159s & 0.260s & 36.260s \\
& N & 0.002s & 0.014s & 0.051s & 0.213s & 0.371s & 1m54.072s \\
\hline
\multirow{2}{*}{\textbf{Filinski}} & Y & 0.004s & 0.018s & 0.012s & 0.172s & 0.258s & 28.719s  \\
& N & 0.004s & 0.020s & 0.014s & 0.221s & 0.360s & 1m26.603s \\
\hline
\multirow{2}{*}{\textbf{Therm.}} & Y & 0.003s & 0.008s & 0.024s & 0.167s & 0.260s & 27.461s \\
& N & 0.003s & 0.011s & 0.029s & 0.223s & 0.367s & 1m20.841s \\
\hline
\multirow{2}{*}{\textbf{Therm. Opt}} & Y & 0.000s & 0.013s & 0.015s & 0.197s & 0.293s & 27.298s \\
& N & 0.002s & 0.015s & 0.024s & 0.247s & 0.364s & 1m23.433s \\
\hline
\end{tabular}
\label{table:parse-int-global}
\end{center}
\end{table}

\begin{table}
\begin{center}
\caption{Benchmark \textsc{INTPARSE-LOCAL}}
\begin{tabular}{r|r|rrrrrr}
\hline
& \textbf{\% bad input} & \textbf{10,000} & \textbf{50,000} & \textbf{100,000} & \textbf{500,000} & \textbf{1,000,000} & \textbf{5,000,000} \\
 \hline
\multirow{3}{*}{\textbf{Indirect}} & 1\% & 0.002s & 0.011s & 0.060s & 0.232s & 0.404s & 2m04.316s \\
& 10\% & 0.002s & 0.002s & 0.053s & 0.221s & 0.375s & 1m40.251s \\
& 50\% & 0.000s & 0.001s & 0.014s & 0.176s & 0.257s & 15.515s \\
\hline
\multirow{3}{*}{\textbf{Filinski}} & 1\% & 0.001s & 0.048s & 0.053s & 0.264s & 0.456s & 3m19.265s \\
& 10\% & 0.004s & 0.043s & 0.048s & 0.234s & 0.420s & 3m22.245s \\
& 50\% & 0.004s & 0.008s & 0.050s & 0.182s & 0.301s & 23.632s \\
\hline
\multirow{3}{*}{\textbf{Therm.}} & 1\% & 0.002s & 0.045s & 0.064s & 0.344s & 0.470s & 4m28.700s \\
& 10\% & 0.003s & 0.028s & 0.064s & 0.266s & 0.445s & 3m26.809s \\
& 50\% & 0.001s & 0.023s & 0.067s & 0.214s & 0.353s & 22.750s \\
\hline
\multirow{3}{*}{\textbf{Therm. Opt.}} & 1\% & 0.002s & 0.030s & 0.059s & 0.227s & 0.403s & 3m01.981s \\
& 10\% & 0.002s & 0.030s & 0.058s & 0.218s & 0.386s & 2m37.321s \\
& 50\% & 0.002s & 0.023s & 0.055s & 0.183s & 0.289s & 27.546s \\
\hline
\end{tabular}
\label{table:parse-int-local}
\end{center}
\end{table}

\begin{table}
\begin{center}
\caption{Benchmark \textsc{MONADIC-ARITH-PARSE}}
\begin{tabular}{r|rrrr}
\hline
 & \textbf{10} & \textbf{20} & \textbf{30} & \textbf{40} \\
 \hline
\textbf{Indirect} & 0.011s & 0.163s & 1.908s & 2m39.860s \\
\textbf{Filinski} & 0.116s & 1.638s & 19.035s & 25m18.794s \\
\textbf{Therm.} & 0.184s & 2.540s & 30.229s & 39m58.183s \\
\hline
\end{tabular}
\label{table:arith-parse}
\end{center}
\end{table}

\begin{acks}                            
  We warmly thank Ben Lippmeier, Edward Z. Yang, Adam Chlipala, and the anonymous
  reviewers for the excellent feedback they provided on this article.

  Gerg{\"o} Barany helped benchmark our Prolog implementation; he
  suggested using GNU Prolog instead of SWI Prolog, which was giving
  much slower results.

  This material is based upon worked supported by the \grantsponsor{GS1122374}{National Science Foundation}{} under Grant No. 1122374. Any opinions, findings, and
  conclusions or recommendations expressed in this material are those
  of the authors and do not necessarily reflect the views of the
  National Science Foundation.
\end{acks}

\bibliography{cont.bib}
\end{document}